\newtheorem{theorem}{Theorem}[section]
\newtheorem{lemma}[theorem]{Lemma}
\newtheorem{corollary}[theorem]{Corollary}
\numberwithin{equation}{section}
\theoremstyle{definition}
\newtheorem*{definition}{Definition}
\theoremstyle{remark}
\newtheorem{remark}{Remark}
\newcommand{\ue}{e}
\newcommand{\ve}{\vec{e}}
\newcommand{\vg}{\vec{g}}
\newcommand{\dG}{\vec{E}(\mathcal{G})}
\newcommand{\Out}{\ensuremath{\text{Out}}}
\newcommand{\KW}{T}
\newcommand{\Id}{\mathrm{Id}}	
\newcommand{\C}{\mathds{C}}
\newcommand{\XX}{\mathcal{X}}
\newcommand{\ZZ}{\mathcal{Z}}
\newcommand{\TrM}{\Lambda}
\newcommand{\RP}{\mathrm{Re}}
\newcommand{\IP}{\mathrm{Im}}
\newcommand{\TrMa}{A}
\newcommand{\TrMb}{B}
\newcommand{\free}{\text{free}}
\newcommand{\dE}{\vec{E}}
\newcommand{\FE}{\Upsilon}
\newcommand{\Blow}{\mathcal{T}_{\text{low}}}
\newcommand{\Bhigh}{\mathcal{T}_{\text{high}}}
\newcommand{\dx}{\vec{x}}
\newcommand{\trace}{\text{tr}}
\newcommand{\FinG}{\mathcal{G}}
\newcommand{\InfG}{\Gamma}
\begin{document}
\date{\today}
\title[Phase transition free regions in the Ising model]{Phase transition free regions in the Ising model via the Kac--Ward operator}

\author{Marcin Lis}
\address{VU University\\
Department of Mathematics\\
De Boelelaan 1081a\\
1081\,HV Amsterdam\\
The Netherlands}
\email{m.lis\,@\,vu.nl}\

\subjclass[2010]{82B20, 60C05}

\begin{abstract} 
We provide an upper bound on the spectral radius of the Kac--Ward transition matrix for a general planar graph. 
Combined with the Kac--Ward formula for the partition function of the planar Ising model, this allows us to identify regions in the complex plane 
where the free energy density limits are analytic functions of the inverse temperature. The bound turns out to be optimal in the case of
isoradial graphs, i.e.\ it yields criticality of the self-dual Z-invariant coupling constants.
\end{abstract}

\keywords{Ising model, phase transition, Kac--Ward operator}
\maketitle
\section*{Introduction}
The Ising model, proposed by Lenz~\cite{Lenz}, and solved in one dimension by his student Ising~\cite{Ising}, is one of the most studied models of statistical mechanics.
It was introduced as a model for ferromagnetism with the intention to explain spontaneous magnetization. Ising proved that the one dimensional 
case does not account for the existence of this phenomenon and concluded that the same should hold in higher dimensions. This was 
later disproved by Peierls~\cite{Peierls}, whose, now classical, argument established that in dimensions higher than one the model does exhibit a phase transition in the magnetic behavior. 
The critical point, i.e.,\ the value of the temperature parameter where the phase transition occurs, for the model defined on the two-dimensional square lattice 
was first identified by Kramers and Wannier~\cite{KraWan1} as the fixed point of a certain duality transformation. The first rigorous proof of 
criticality of the self-dual point came together with the exact solution of the two-dimensional model done by Onsager~\cite{Onsager},
who explicitly computed the free energy density and showed that it is not analytic only at this particular value of the temperature. 

Since then, several different methods have been developed to study the two-dimensional Ising model. One of them is the approach of Kac and Ward~\cite{KacWard},
who expressed the partition function of the model in terms of the determinant of what is now called the Kac--Ward operator. This combinatorial in nature idea 
has been so far a source of numerous results about the planar Ising model. The most classical are the (alternative to the solution of 
Onsager and Yang~\cite{Yang}) analytic derivations of the free energy density and magnetization performed by Vdovichenko~\cites{Vdovichenko1,Vdovichenko2}, 
who built on earlier works of Sherman~\cite{Sherman} and Burgoyne~\cite{Burgoyne}.
However, most of the articles concerning the Kac--Ward formula left many details of the 
method unexplained and even contained errors. The first completely rigorous account of this approach seems to be given much later by Dolbilin et al.\
\cites{DolEtAl}. A more recent treatment, presented by Kager, Meester and the author~\cite{KLM}, concentrates on loop expansions
of the Kac--Ward determinants. As a result, the authors not only obtain rigorous proofs of the combinatorial foundations of the approach, 
but also rederive the critical temperature of the Ising model on the square lattice. 
The Kac--Ward determinants also turned out to be the right tool for the computation of the critical point of Ising models defined on planar doubly periodic 
graphs (Cimasoni and Duminil-Copin~\cite{CimDum}). Moreover, Cimasoni~\cite{Cimasoni} showed that the Kac--Ward formula can be
generalized to Ising models defined on surfaces of higher genus. Finally, as pointed out by the author in~\cite{Lis}, 
the Kac--Ward method is intrinsically connected with the discrete holomorphic approach to the Ising model introduced by Smirnov~\cite{Smirnov}.

In this paper, we continue in the spirit of~\cite{KLM}, where the spectral radius and operator norm of the Kac--Ward transition matrices were first considered.
We explicitly compute the operator norm of what we call the conjugated transition matrix defined for a general graph in the plane, 
and hence we provide an upper bound on the spectral radius of the standard Kac--Ward transition matrix. 
Combining this result with the Kac--Ward formula for the high and
low-temperature expansion of the partition function yields domains of parameters of the model where there is no phase transition. 
We will focus only on the analytic properties of the free energy, but our bounds, together with the methods from~\cite{KLM}, 
also allow to identify regions where there is
spontaneous magnetization or exponential decay of the two-point functions. 
The advantage of our approach is that it does not require any form of periodicity of the underlying graph.

Moreover, our results are optimal for the Ising model defined on isoradial graphs with uniformly bounded rhombus angles 
(see condition~\eqref{eq:isoradialcond}), i.e.\ we can conclude that the self-dual 
Z-invariant coupling constants, first considered by Baxter~\cite{Baxter1}, are indeed critical in the classical sense. 
To be more precise, after introducing the inverse temperature parameter~$\beta$ to the corresponding Ising model, 
we show that the thermodynamic limits of the free energy density can have singularities only at $\beta=1$.
The isoradial graphs, or equivalently rhombic lattices, were introduced by Duffin~\cite{Duffin} as potentially the largest family of graphs
where one can do discrete complex analysis. As mentioned in~\cite{ChelkSmir},
this class of graphs seems to be the most general family of graphs where the critical Ising model can be nicely defined, and
it also seems to be the one, where our bounds for the spectral radius and operator norm of the Kac--Ward transition matrix yield the critical point of the Ising model.

The self-dual Z-invariant Ising model has been extensively studied in the mathematics literature.
Chelkak and Smirnov~\cite{ChelkSmir} proved that the associated discrete holomorphic fermion has a universal, conformally invariant 
scaling limit. Boutillier and de Tili{\`e}re~\cites{BoutTil1, BoutTil2} gave a complete description of the corresponding
dimer model, yielding also an alternative proof of Baxter's formula for the critical free energy density. Mercat~\cite{Mercat} defined a 
notion of criticality for discrete Riemann surfaces and investigated its connection with criticality in the Ising model. 
The self-dual Z-invariant Ising model is commonly referred to as critical.
However, criticality in the statistical mechanics sense has been established only in the case of doubly periodic isoradial graphs
(see Example 1.6 of~\cite{CimDum} and the references therein). 
As already mentioned, we extend this result to a wide class of aperiodic isoradial graphs.

This paper is organized as follows: in Section~\ref{sec:Isingresults}, we introduce the Ising model and the notion of phase transition, and we state our main theorem. Section~\ref{sec:KWresults}
defines the Kac--Ward operator and presents its connection to the Ising model. It also contains our results for the Kac--Ward transition matrix. 
The proof of the main theorem is postponed until Section~\ref{sec:freeenergy}.

\section{Results for the Ising model} \label{sec:Isingresults}

\subsection{The Ising model}

Let $\InfG$ be an infinite, planar, simple graph embedded in the complex plane and let $\InfG^*$ be its planar dual.
We assume that both $\InfG$ and $\InfG^*$ have uniformly bounded vertex degrees. One should think of~$\InfG$ as any kind of tiling or discretization of the plane. 
In particular, $\InfG$ can be a regular lattice, or an instance of an isoradial graph (see Section~\ref{sec:isoradialIsing}).
We call a subgraph~$\FinG$ of $\InfG$ a \emph{subtiling} if
there is a collection of faces of~$\InfG$, such that $\FinG$ is the subgraph induced by all edges forming boundaries of these faces.
We define the \emph{boundary} $\partial \FinG $ of $\FinG$ to be the set of vertices of~$\FinG$ 
which lie on the boundary of at least one face which is not in the defining collection of faces. 
For a simple graph~$\FinG$ embedded in the complex plane, we will write $V(\FinG)$ for the set 
of vertices of $\FinG$, which we identify with the corresponding complex numbers. By $E(\FinG)$ we will denote the set of edges which are represented by unordered pairs of vertices.

Let  $J=(J_{\ue})_{\ue \in E(\InfG)}$ be a system of \emph{ferromagnetic}, i.e.\ positive, \emph{coupling constants} on the edges of $\InfG$.
For each finite subtiling $\FinG$, we will consider an \emph{Ising model} on $\FinG$ defined by $J$ and the \emph{inverse temperature} parameter $\beta$. Borrowing the notation from \cite{KLM}, let 
\[ 
\Omega_{\FinG}^{\free}=\{-1,+1\}^{V(\FinG)} \quad \text{and} \quad \Omega_{\FinG}^+=\{ \sigma \in \Omega_{\FinG}^{\free} : \sigma_z =+1 \text{ if } z \in\partial \FinG   \}
\]
 be the spaces of \emph{spin configurations} with \emph{free} and \emph{positive boundary conditions}. The Ising model with $\Box$ boundary conditions ($\Box \in \{\free, +\}$) 
is defined by a probability measure on $\Omega^{\Box}_{\FinG}$ given by 
\begin{align*}
\mathbf{P}^{\Box}_{\FinG,\beta}(\sigma) = \frac{1}{\ZZ^{\Box}_{\FinG}(\beta) }\prod_{ \{z,w\} \in E(\FinG)} \exp \big(\beta J_{\{z,w\}} \sigma_z \sigma_w \big),\qquad \sigma \in \Omega^{\Box}, 
\end{align*} 
where the normalizing factor
\begin{align*}
\ZZ^{\Box}_{\FinG}(\beta) = \sum_{\sigma \in \Omega^{\Box}} \prod_{ \{z,w\} \in E(\FinG)} \exp \big(\beta J_{\{z,w\}} \sigma_z \sigma_w \big)
\end{align*} 
is called the \emph{partition function}.

Throughout the paper, we will make a natural assumption on the coupling constants, namely we will require
that there exist numbers $m$ and $M$, such that for all $\ue \in E(\InfG)$,
\begin{align} \label{eq:regularcoupling} 
0 < m \leq J_{\ue} \leq M < \infty.
\end{align} 

\subsection{Phase transition}
An object of interest in statistical physics is the \emph{free energy density} (or \emph{free energy per site}) defined by
\begin{align*}
f^{\Box}_{\FinG}(\beta) = - \frac{\ln \ZZ^{\Box}_{\FinG}(\beta)}{\beta|V(\FinG)|} . 
\end{align*}
It is clear that the free energy density is an analytic function of the inverse temperature $\beta \in (0,\infty)$ for every finite subtiling $\FinG$.
However, when $\FinG$ approaches 
$\InfG$, or more generally, some infinite subgraph of $\InfG$ (this is called taking a \emph{thermodynamic limit}), the
limiting function can have a \emph{critical point}, i.e.\ a particular value of $\beta$ where it is not analytic. The existence of such a point indicates
that the system undergoes a phase transition when one varies $\beta$ through the critical value. This is a universal way of looking at the phenomenon of phase transition since 
it can be applied to any model of statistical mechanics.

Another approach, and perhaps a more natural one in the setting of the Ising model, is to investigate the magnetic behavior of the system. 
To this end, one defines the \emph{spin correlation functions}, i.e.\ the expectations of products of the \emph{spin variables} taken with 
respect to the Ising probability measure. The simplest cases are the \emph{one} and \emph{two-point functions}
\begin{align*}
\langle \sigma_{z} \rangle^{\Box}_{\FinG,\beta} = \sum_{\sigma \in \Omega^{\Box}_{\FinG}} \sigma_{z} \mathbf{P}^{\Box}_{\FinG,\beta}(\sigma), 
\quad \langle \sigma_{z}\sigma_{w} \rangle^{\Box}_{\FinG,\beta} = \sum_{\sigma \in \Omega^{\Box}_{\FinG}} \sigma_{z}\sigma_{w} \mathbf{P}^{\Box}_{\FinG,\beta}(\sigma), \quad z,w \in V(\FinG).
\end{align*}
Since the model is ferromagnetic, and due to the effect of positive boundary conditions, 
the corresponding one-point function $\langle \sigma_{z} \rangle^{+}_{\FinG,\beta}$ is strictly positive for all finite subtilings $\FinG$ and for all $\beta$. 
In other words, in finite volume, the spins prefer the~$+1$ state at all temperatures. 
However, when $\FinG$ approaches~$\InfG$, the boundary moves further and further away
and, at temperatures high enough, its influence on a particular spin vanishes. As a result, the limiting one-point function equals zero and the spin equally likely occupies the~$+1$ and~$-1$ state.
On the other hand, this does not happen at low temperatures, i.e.\ if $\beta$ is sufficiently large, then $\langle \sigma_{z} \rangle^{+}_{\FinG,\beta}$ 
stays bounded away from zero uniformly in~$\FinG$.
This means that the effect of positive boundary conditions is carried through all length scales and 
there is \emph{spontaneous magnetization}. In this approach, the critical point is the value of~$\beta$, which separates the regions with and without spontaneous magnetization.
In some cases, it is more convenient to investigate the behavior of the two-point functions.
Here, one also discerns two different non-critical cases: either the system is \emph{disordered}, i.e.\ 
the thermodynamic limits of the two point functions decay exponentially fast to zero with the graph distance between 
$z$ and $w$ going to infinity, or the system is \emph{ordered}, which means that the limiting two-point functions stay bounded away from zero uniformly in $z$ and $w$. 
For periodic Ising models, the critical point defined as the value of $\beta$ which separates these two regimes is the same as the critical point defined via spontaneous magnetization
(see Theorem 1 in \cite{ABF} and the references therein). In particular, the system exhibits long-range ferromagnetic order if and only if there is spontaneous magnetization.

Property~\eqref{eq:regularcoupling}, 
together with the conditions we imposed on~$\InfG$ and~$\InfG^*$, is enough for the existence of a phase transition in terms of spontaneous magnetization and the behavior of the two-point functions. 
This is a consequence of the classical arguments of Peierls~\cite{Peierls} and Fisher~\cite{Fisher}. 
In this paper, we will only focus on the phase transition in the analytic behavior of the free energy density limits, but our results for the Kac--Ward operator
can be also used in the setting of the magnetic phase transition (see Section~\ref{sec:magnetic}). 

\subsection{The main result}
Let $\dE(\FinG)$ be the set of directed edges of~$\FinG$ which are the ordered pairs of vertices. For a directed edge $\ve=(z,w)$, we define its \emph{reversion} by $-\ve=(w,z)$ and we 
obtain the undirected version by dropping the arrow from the notation, i.e.\ $\ue=\{z,w\}$. If $z$ is a vertex, then we write $\Out_{\FinG}(z)=\{ (z',w')\in \dE(\FinG): z'=z\}$ 
for the set of edges emanating from~$z$.

Let $\dx$ and $x$ be systems of nonzero complex weights on the directed and undirected edges of $\FinG$ respectively. 
We call $\dx$ \emph{(Kac--Ward) contractive} if 
\begin{align} \label{eq:contractive}
\sum_{\ve \in \Out_{\FinG}(z)} \arctan |\dx_{\ve}|^2 \leq \frac{\pi}{2} \qquad \text{for all} \ z \in V(\FinG),
\end{align}
and we say that $x$ \emph{factorizes} to~$\dx$ if
\begin{align} \label{eq:factorizes}
x_{\ue} = \dx_{\ve} \dx_{-\ve} \qquad \text{for all} \ \ve \in \dE(\FinG).
\end{align}
For the origin of condition \eqref{eq:contractive}, see Corollary \ref{cor:contraction}.

In the context of the Ising model, two particular systems of edge weights will be important, namely the so called \emph{high} and \emph{low-temperature} weights given by
\begin{align*}
\tanh \beta J = \big(\tanh \beta J_{\ue}\big)_{\ue \in E(\InfG)} \quad \text{and} \quad  \exp(-2 \beta J) = \big(\exp(-2 \beta J_{\ue})\big)_{\ue^* \in E(\InfG^*)}.
\end{align*}

\begin{definition}
We say that the coupling constants satisfy the \emph{high-temperature} condition if $\tanh J$ factorizes to a contractive system of weights on the directed edges of $\InfG$, and we say that they satisfy the 
\emph{low-temperature} condition if $\exp(-2 J)$ factorizes to a contractive system of weights on the directed edges of $\InfG^*$.
\end{definition}
Let
\begin{align*}
\FE_{\Box} = \big\{ f^{\Box}_{\FinG}: \text{$\FinG$ is a finite subtiling of $\InfG$} \big \}
\end{align*} 
be the family of all free energy densities with $\Box$ boundary conditions, and let~$\overline{\FE}_{\Box}$ be its closure in the topology of pointwise convergence on $(0,\infty)$. 
Note that~$\overline{\FE}_{\Box}$ contains all thermodynamic limits and can also contain other types of accumulation points of $\FE_{\Box}$.
Using the definition of $\ZZ^{\Box}_{\FinG}$, it is not difficult to prove that, under condition \eqref{eq:regularcoupling}, $\FE_{\Box}$ is uniformly bounded and 
equicontinuous on compact subsets of $(0,\infty)$. In particular, all sequences 
in $\FE_{\Box}$ which converge pointwise, converge uniformly on compact sets, 
and therefore all functions in~$\overline{\FE}_{\Box}$ are continuous on $(0,\infty)$. However, this is not enough to conclude analyticity of the limiting functions, and indeed, critical points do arise. 

In this paper, we show that, if the coupling constants satisfy the high-temperature condition, then all functions in $\FE_{\free}$ 
can be extended analytically to a complex domain

\begin{align*}
\Bhigh & =\Big\{\beta :\ 0< \RP \beta<1,\  2M|\IP \beta| < \frac{\pi}{2},\   \frac{\cosh(2 m\RP \beta )}{\cosh(2m) \cos (2M \IP \beta) } < 1 \Big\}
\end{align*}
which we call the \emph{high-temperature regime}. Note that $(0,1) \subset \Bhigh$.
Similarly we prove that, if the coupling constants satisfy the low-temperature condition, then all functions in $\FE_{+}$ can be extended to analytic functions on 
\begin{align*}
\Blow=\{\beta:  1 < \RP \beta \}
\end{align*}
which we call the \emph{low-temperature regime}. Moreover, we show that $\FE_{\Box}$ is uniformly bounded on compact subsets of the
corresponding regimes. 

\begin{figure} 
		\begin{center}
			\includegraphics{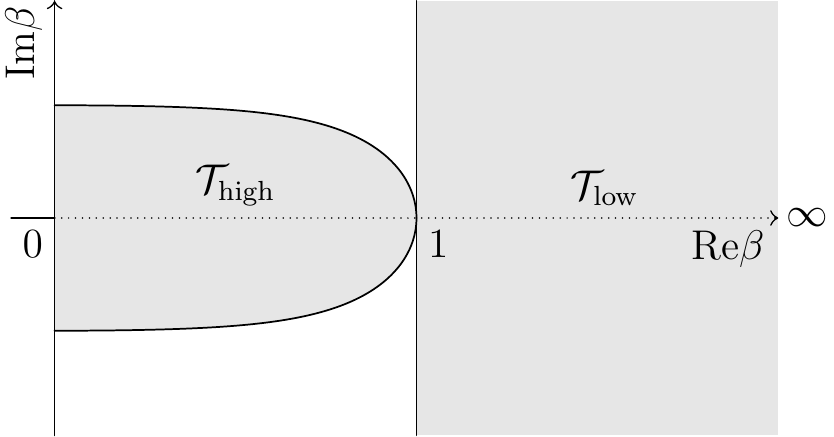}
		\end{center}
		\caption{The high and low-temperature regimes.}
		\label{fig:regimes}
\end{figure}

For complex analytic functions, this is enough to conclude that all pointwise limits are also complex analytic.
More precisely, let $D$ be a complex domain and let $E\subset D$ have an accumulation point in $D$.
The Vitali-Porter theorem (see~\cite{Shiff}*{\S 2.4}) states that if a sequence of holomorphic functions 
defined on $D$ converges pointwise on $E$, and is uniformly bounded
on compact subsets of $D$, then it converges uniformly on compact subsets of $D$ and the limiting function is holomorphic.
In our context, the role of the domain $D$ is played by the high and low-temperature regimes, and 
$E$ is the intersection of the given regime with the positive real numbers.
 
In other words, under the high and low-temperature conditions on the coupling constants, the high and low-temperature regimes
are free of phase transition in terms of analyticity of the thermodynamic limits of the free energy density.
This is summarized in the following theorem:
\begin{theorem} \label{thm:FreeEnergy}
If the coupling constants satisfy
\begin{itemize} 
\item[(i)] the high-temperature condition, then all functions in $\FE_{\free}$ extend analytically to $\Bhigh$, and $\FE_{\free}$ is 
uniformly bounded on compact subsets of~$\Bhigh$. As a consequence, all functions in $\overline{\FE}_{\free}$ are analytic on $\Bhigh$, and in particular on $(0,1)$.
\item[(ii)] the low-temperature condition, then all functions in $\FE_+$ extend analytically to $\Blow$, and $\FE_+$ is uniformly bounded on
 compact subsets of~$\Blow$. As a consequence, all functions in $\overline{\FE}_{+}$ are analytic on $\Blow$, and in particular on $(1,\infty)$.
\end{itemize}
\end{theorem}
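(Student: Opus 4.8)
The plan is to reduce the free energy density to a Kac--Ward determinant and to control that determinant through the spectral bound for the transition matrix. First I would invoke the Kac--Ward formula (Section~\ref{sec:KWresults}): for a finite subtiling $\FinG$ with free boundary conditions one has the identity
$\ZZ^{\free}_{\FinG}(\beta)=2^{|V(\FinG)|}\big(\prod_{\ue\in E(\FinG)}\cosh(\beta J_{\ue})\big)\sqrt{\det(\Id-\KW_{\FinG})}$,
where $\KW_{\FinG}$ is built from a factorization of the weights $\tanh(\beta J)$; the low-temperature formula for $\ZZ^{+}_{\FinG}$ on $\InfG^*$ uses $\exp(-2\beta J)$ analogously. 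Taking logarithms writes $f^{\free}_{\FinG}$ as a fixed analytic prefactor plus $-\tfrac{1}{2\beta|V(\FinG)|}\ln\det(\Id-\KW_{\FinG})$. Since a finite subtiling has finitely many edges, the prefactor $\sum_{\ue}\ln\cosh(\beta J_{\ue})$ is analytic and bounded once each $\cosh(\beta J_{\ue})\neq0$, which on $\Bhigh$ is forced by $2M|\IP\beta|<\tfrac{\pi}{2}$: this gives $|\IP\beta|\,J_{\ue}<\tfrac{\pi}{4}$, hence $\RP\cosh(\beta J_{\ue})=\cosh(\RP\beta\,J_{\ue})\cos(\IP\beta\,J_{\ue})>0$. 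Everything is thus reduced to analyticity and uniform boundedness of $\ln\det(\Id-\KW_{\FinG})$.

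The determinant is a polynomial in the matrix entries, which are analytic in $\beta$ on the regime, so $\det(\Id-\KW_{\FinG})$ is analytic; the crux is that it does not vanish. For this I would show that each $\beta\in\Bhigh$ admits a strictly contractive factorization of $\tanh(\beta J)$, so that the spectral bound of Corollary~\ref{cor:contraction} yields $\rho(\KW_{\FinG})<1$ and hence $\det(\Id-\KW_{\FinG})\neq0$. Starting from the factorization $\dx^{0}$ of $\tanh J$ supplied by the high-temperature condition, I rescale it to $\dx_{\ve}(\beta)=\dx^{0}_{\ve}\sqrt{\tanh(\beta J_{\ue})/\tanh J_{\ue}}$, so that $|\dx_{\ve}(\beta)|^{2}=|\dx^{0}_{\ve}|^{2}\,|\tanh(\beta J_{\ue})/\tanh J_{\ue}|$. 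The key elementary inequality is $|\tanh(\beta J)|\le\tanh J$ for every $J\in[m,M]$; a direct computation shows this is equivalent to $\cosh(2\RP\beta\,J)\le\cosh(2J)\cos(2\IP\beta\,J)$, and monotonicity in $J$ (the factor $\cosh(2\RP\beta J)/\cosh(2J)$ decreases while $1/\cos(2\IP\beta J)$ increases) shows it is implied, strictly, by the defining condition $\cosh(2m\RP\beta)/(\cosh(2m)\cos(2M\IP\beta))<1$ of $\Bhigh$. As $\arctan$ is increasing, shrinking the moduli keeps \eqref{eq:contractive} valid and strict. The low-temperature case is easier: the factorization $\dx_{\ve}(\beta)=\dx^{0}_{\ve}e^{-(\beta-1)J_{\ue}}$ has $|\dx_{\ve}(\beta)|=|\dx^{0}_{\ve}|e^{-(\RP\beta-1)J_{\ue}}$, strictly smaller than $|\dx^{0}_{\ve}|$ for $\RP\beta>1$ independently of $\IP\beta$, which is why $\Blow$ is the whole half-plane.

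With $\det(\Id-\KW_{\FinG})$ analytic and nonvanishing on the (simply connected) regime, positive on $(0,1)$ where it agrees with the genuine partition function, the Kac--Ward square root extends to a single-valued analytic branch; together with the nonvanishing prefactor and $\beta\neq0$ this continues $f^{\free}_{\FinG}$ analytically to all of $\Bhigh$ (and $f^{+}_{\FinG}$ to $\Blow$). For the uniform bound on a compact $K$ I would use compactness to upgrade the strict contractivity to a uniform one, getting $\rho(\KW_{\FinG})\le\rho_{0}<1$ for all $\FinG$ and $\beta\in K$. Writing $\det(\Id-\KW_{\FinG})=\prod_{i}(1-\lambda_{i})$ over its $O(|E(\FinG)|)$ eigenvalues, one has $1-\rho_{0}\le|1-\lambda_{i}|\le2$ and $\RP(1-\lambda_{i})>0$, so each $\ln(1-\lambda_{i})$ is bounded; since $\InfG$ has bounded degrees, $|E(\FinG)|\le C|V(\FinG)|$, and dividing by $|V(\FinG)|$ makes $\tfrac{1}{|V(\FinG)|}|\ln\det(\Id-\KW_{\FinG})|$ bounded uniformly. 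This yields the uniform boundedness of $\FE_{\free}$ (resp.\ $\FE_{+}$) on compact subsets.

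Finally, the closure statement follows from the Vitali--Porter theorem quoted above: $\FE_{\free}$ is a family of holomorphic functions on $\Bhigh$, uniformly bounded on compacta, so every pointwise limit---in particular each element of $\overline{\FE}_{\free}$---is holomorphic on $\Bhigh$, and likewise for $\FE_{+}$ on $\Blow$. I expect the main obstacle to lie precisely in the passage from pointwise to uniform spectral control: establishing $\rho(\KW_{\FinG})\le\rho_{0}<1$ simultaneously for all subtilings requires that the uniform slack in \eqref{eq:contractive} produced by compactness survive for the possibly unbounded moduli $|\dx^{0}_{\ve}|$ of a general contractive factorization, which is where the bounded-degree hypothesis and the explicit $m,M$-dependence of $\Bhigh$ must be used with care.
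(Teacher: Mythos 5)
Your architecture coincides with the paper's (Kac--Ward formula, spectral control of the transition matrix, Vitali--Porter), and the pointwise part of your argument is sound: for each fixed finite subtiling and each $\beta$ in the regime, strictly shrinking the moduli of the factorization does force $\rho(\TrM_{\FinG})<1$, hence a nonvanishing determinant, and simple connectivity of $\Bhigh$ and $\Blow$ gives the single-valued branches you need to extend each individual $f^{\Box}_{\FinG}$. The genuine gap is exactly the one you flag in your last paragraph, and it cannot be waved away: your boundedness estimate $\tfrac{1}{|V(\FinG)|}|\ln\det(\Id-\TrM_{\FinG})|\le C$ requires $\rho(\TrM_{\FinG}(\tanh\beta J))\le\rho_0<1$ \emph{uniformly} over all subtilings $\FinG$ and all $\beta\in K$, and ``using compactness to upgrade strict contractivity to uniform contractivity'' does not deliver this. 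Compactness of $K$ does give a uniform $s<1$ with $|\dx_{\ve}(\beta)|^2\le s\,|\dx^{0}_{\ve}|^2$, but the slack this creates in \eqref{eq:contractive} is \emph{not} uniform: the high-temperature condition places no upper bound on the individual moduli $|\dx^{0}_{\ve}|$ (at a vertex, one modulus may be enormous provided the others are tiny), and $\arctan(sT)\to\pi/2$ as $T\to\infty$ for any fixed $s<1$, so $\sup_{z}\sum_{\ve\in\Out_{\FinG}(z)}\arctan\big(s|\dx^{0}_{\ve}|^2\big)$ can still equal $\pi/2$ over the family of all subtilings. Without a uniform slack there is no uniform bound $\rho_0<1$, the per-graph constants $\max_k|\ln(1-\lambda_k)|$ may blow up as $\FinG$ grows, and the uniform boundedness needed for Vitali--Porter (i.e.\ for the statement about $\overline{\FE}_{\Box}$) is not established.

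The fix is already in the toolbox you cite, but you must apply it with the parameter $s$ rather than with $s=1$: Corollary~\ref{cor:contraction} states that $\|\TrM(\dx)\|\le s$ \emph{if and only if} $\sum_{\ve\in\Out_{\FinG}(z)}\arctan\big(|\dx_{\ve}|^2/s\big)\le\pi/2$ for all $z$. Since on $K$ you have $|\dx_{\ve}(\beta)|^2/s\le|\dx^{0}_{\ve}|^2$, monotonicity of $\arctan$ gives $\sum_{\ve}\arctan\big(|\dx_{\ve}(\beta)|^2/s\big)\le\sum_{\ve}\arctan\big(|\dx^{0}_{\ve}|^2\big)\le\pi/2$, i.e.\ $\|\TrM_{\FinG}(\dx(\beta))\|\le s<1$ uniformly in $\FinG$ and $\beta\in K$, with no bound on the moduli required; this is exactly how the paper proves Theorem~\ref{thm:boundedradius}, which is the statement your proposal is missing. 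With that bound in hand the rest of your proof goes through as written: the $\cosh$-prefactor estimate, the bound $|E(\FinG)|\le C|V(\FinG)|$ from bounded degrees, and Vitali--Porter are all as in the paper. (The paper treats the determinant term slightly differently, via the trace series $-\sum_{r\ge1}\trace[\TrM^r_{\FinG}]/r$, which yields analyticity and the uniform bound simultaneously; your route through a nonvanishing determinant on a simply connected domain is an acceptable substitute once the uniform spectral bound is secured. Also, a notational point: what you call $\KW_{\FinG}$ and use inside $\det(\Id-\KW_{\FinG})$ and $\rho(\KW_{\FinG})$ is the transition matrix $\TrM_{\FinG}$; in the paper $\KW=\Id-\TrM$ is the Kac--Ward operator itself.)
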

The proof of this theorem is provided in Section \ref{sec:freeenergy}.
Its main ingredients are the Kac--Ward formula for the partition function of the Ising model (see Theorem~\ref{thm:expansions})
and the bound on the spectral radius of the the Kac--Ward transition matrix given in Theorem~\ref{thm:boundedradius}.

In most of the applications, the role of boundary conditions is immaterial for the thermodynamic limit of the free energy density. Indeed, 
it is not hard to prove that whenever $|\partial \FinG|/|V(\FinG)|$ is small, 
then for $\beta\in(0,\infty)$, $f^{\free}_{\FinG}(\beta)$ and $f^+_{\FinG}(\beta)$ are close to each other 
(and also to any other free energy density function defined for other types of boundary conditions on $\FinG$). Hence, limits of the free energy density taken along sequences, where the above ratio approaches zero, are the same for all boundary conditions.
In this paper, we consider the free and positive boundary conditions since in these cases, the partition function of the model is given in terms of the determinant of the Kac--Ward operator. 
Thus, one can use properties of the operator itself to derive results for the free energy density.

\subsection{The isoradial case} \label{sec:isoradialIsing}

\begin{figure} 
		\begin{center}
			\includegraphics{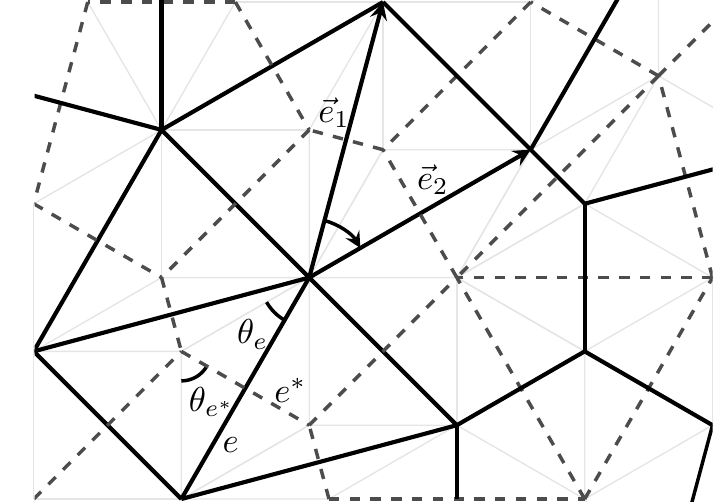}
		\end{center}
		\caption{Local geometry of an isoradial graph and its dual. The underlying rhombic lattice is drawn in pale lines. The directed arc marks the turning angle $\angle(\ve_1,\ve_2)$.}
		\label{fig:isoradial}
\end{figure}

Assume that $\InfG$ is an isoradial graph, i.e.\ all its faces can be inscribed in circles with a common radius,
and all the circumcenters lie within the corresponding faces. An equivalent characterization says
that $\InfG$ and $\InfG^*$ can be simultaneously embedded in the plane in such a way, that each pair of mutually dual edges
forms diagonals of a rhombus. The roles of $\InfG$ and $\InfG^*$ are therefore symmetric and the dual graph is also isoradial. 
The simplest cases of isoradial graphs are the regular lattices: the square, triangular and hexagonal lattice.

One assigns to each edge $\ue$ the interior angle $\theta_{\ue}$ that $\ue$ creates with any side of the associated rhombus (see Figure~\ref{fig:isoradial}).
Note that $\theta_{\ue}+\theta_{\ue^*}= \pi/2$.
There is a particular geometric choice of the coupling constants given by
\begin{align} \label{eq:zinvariant}
\tanh J_{\ue} = \tan (\theta_{\ue}/2), \quad \text{or equivalently,} \quad \exp(-2 J_{\ue})= \tan (\theta_{\ue^*}/2).
\end{align} 
These coupling constants were first considered by Baxter~\cites{Baxter1}.
We will refer to them as the \emph{self-dual Z-invariant} coupling constants since these are the only coupling constants that 
make the Ising model invariant under the star-triangle transformation, and also satisfy the above generalized Kramers-Wannier self-duality \eqref{eq:zinvariant}. 
For more details on their origin, see \cites{BoutTil1,Baxter2}.

Observe that in this setting, condition \eqref{eq:regularcoupling} is equivalent to the existence of constants 
$k$ and $K$, such that for all $\ue \in E(\InfG)$,
\begin{align} \label{eq:isoradialcond}
0 < k \leq \theta_{\ue} \leq K < \pi.
\end{align}
This means that the associated rhombi have a positive minimal area, and also gives a uniform
bound on the maximal degree of~$\InfG$ and~$\InfG^*$.

The next corollary states that, for the Ising model defined by the above coupling constants, the only possible point of phase transition in the analytic behavior of the free energy density 
is~$\beta =1$.
\begin{corollary} \label{cor:isoradial}
Let $\InfG$ be an isoradial graph satisfying condition~\eqref{eq:isoradialcond}.
Consider Ising models defined by the self-dual Z-invariant coupling constants on finite subtilings of~$\InfG$.
Then, all functions in $\overline{\FE}_{\free}$ are analytic on~$(0,1)$, and all functions in $\overline{\FE}_{+}$ are analytic on $(1,\infty)$.
\end{corollary}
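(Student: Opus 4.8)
The plan is to obtain Corollary~\ref{cor:isoradial} as a direct consequence of Theorem~\ref{thm:FreeEnergy}, by checking that the self-dual Z-invariant coupling constants satisfy both the high- and the low-temperature conditions. First I would record that condition~\eqref{eq:isoradialcond} guarantees the standing assumption~\eqref{eq:regularcoupling} together with the uniform bound on the degrees of $\InfG$ and $\InfG^*$, so that the whole framework of the theorem applies. It then remains to exhibit, for each of the two conditions, an explicit contractive factorizing system of edge weights.

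The geometric heart of the argument is the angle-sum identity around each vertex. Since $\InfG$ is isoradial, the rhombi associated to the edges incident to a vertex $z \in V(\InfG)$ tile a neighborhood of $z$, so their interior angles at $z$ add up to $2\pi$. As the diagonal $\ue$ bisects the rhombus angle at $z$, that angle equals $2\theta_{\ue}$, and therefore $\sum_{\ue \ni z}\theta_{\ue} = \pi$. The same reasoning applied to the (also isoradial) dual graph gives $\sum_{\ue^* \ni z^*}\theta_{\ue^*}=\pi$ at every $z^* \in V(\InfG^*)$.

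For the high-temperature condition I would set $\dx_{\ve} := \sqrt{\tan(\theta_{\ue}/2)}$ for each $\ve \in \dE(\InfG)$, where $\ue$ is the underlying undirected edge; this is well defined and positive because $\theta_{\ue}\in(0,\pi/2)$. By~\eqref{eq:zinvariant}, $\dx_{\ve}\dx_{-\ve}=\tan(\theta_{\ue}/2)=\tanh J_{\ue}$, so $\tanh J$ factorizes to $\dx$. Since $\theta_{\ue}/2\in(0,\pi/4)$ we have $\arctan|\dx_{\ve}|^2=\arctan(\tan(\theta_{\ue}/2))=\theta_{\ue}/2$, and the angle-sum identity yields
\[
\sum_{\ve\in\Out_{\InfG}(z)}\arctan|\dx_{\ve}|^2=\tfrac12\sum_{\ue\ni z}\theta_{\ue}=\tfrac{\pi}{2}
\]
for every $z$, so~\eqref{eq:contractive} holds, in fact with equality. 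Hence $\dx$ is contractive and the high-temperature condition is met. The low-temperature condition is handled symmetrically: setting $\dy_{\ve^*}:=\sqrt{\tan(\theta_{\ue^*}/2)}$ on $\dE(\InfG^*)$ gives $\dy_{\ve^*}\dy_{-\ve^*}=\tan(\theta_{\ue^*}/2)=\exp(-2J_{\ue})$, so $\exp(-2J)$ factorizes to $\dy$, and the dual angle sum again produces equality in~\eqref{eq:contractive} at each dual vertex. With both conditions verified, Theorem~\ref{thm:FreeEnergy}(i) gives analyticity of every function in $\overline{\FE}_{\free}$ on $\Bhigh\supset(0,1)$, and part~(ii) gives analyticity of every function in $\overline{\FE}_+$ on $\Blow=\{\RP\beta>1\}\supset(1,\infty)$, which is precisely the assertion of the corollary.

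I expect the only genuinely substantive point to be the angle-sum identity $\sum_{\ue\ni z}\theta_{\ue}=\pi$: although elementary, it is exactly what forces equality, rather than strict inequality, in~\eqref{eq:contractive}. This tightness places the self-dual Z-invariant constants precisely on the boundary of the contractive regime, so that the two regimes together cover all of $(0,\infty)$ except the single point $\beta=1$; this is the sense in which the bound is optimal, as announced in the introduction. The remaining steps are just the substitution of~\eqref{eq:zinvariant} into the definitions and an application of the theorem.
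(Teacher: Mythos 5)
Your proposal is correct and follows essentially the same route as the paper's own proof: verify that the weights $\dx_{\ve}=\sqrt{\tan(\theta_{\ue}/2)}$ (and their dual counterparts) factorize $\tanh J$ and $\exp(-2J)$ and are contractive via the angle sum $\sum_{\ue \ni z}\theta_{\ue}=\pi$ around each vertex of $\InfG$ and $\InfG^*$, then invoke Theorem~\ref{thm:FreeEnergy}. You merely spell out the geometric details (rhombus tiling, $\theta_{\ue}\in(0,\pi/2)$, equality in~\eqref{eq:contractive}) that the paper leaves implicit.
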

\begin{proof}
By \eqref{eq:zinvariant} and the fact that the angles $\theta$ sum up to $\pi$ around each vertex of $\InfG$ and $\InfG^*$,
 the self-dual Z-invariant coupling constants simultaneously satisfy the 
high and low-temperature condition. Indeed, the contractive weight systems on the directed edges are given by 
$\dx_{\ve} = \sqrt{\tan (\theta_{\ue}/2)}$.
The claim follows therefore
from Theorem \ref{thm:FreeEnergy}.
\end{proof}
Note, that in this case, the inequalities in \eqref{eq:contractive} become equalities.

\subsection{Implications for the magnetic phase transition} \label{sec:magnetic}

Recently~\cite{KLM}, the Kac--Ward operator and the signed weights it induces on the closed non-backtracking walks in a graph
were used to rederive the critical temperature of the homogeneous Ising model on the square lattice. 
It was done both in terms of analyticity of the free energy density limit and the change in behavior of the
one and two-point functions.
The methods used there to analyse the correlation functions work also for general planar graphs under
some slight regularity constraints. To be more precise, the proof of Theorem~1.4 in~\cite{KLM} which gives the existence of spontaneous 
magnetization, uses the fact that 
appropriate Kac--Ward transition matrices have spectral radius smaller than one and that the dual graph (which is $\InfG^*$ in our setup)
has subexponential growth of volume, i.e.\ the volume of balls in graph distance grows subexponentially with the radius.
This condition is, for instance, satisfied by all isoradial graphs where~\eqref{eq:isoradialcond} holds true.
On the other hand, Theorem~1.6 and Corollary~1.7 from \cite{KLM}, which yield exponential decay of the two-point functions,
use the fact that the operator norm of appropriate Kac--Ward matrices is smaller than one.

The bounds that are stated in Section~\ref{sec:KWresults} allow
to generalize the above results to arbitrary planar graphs, i.e.\ together with the methods from~\cite{KLM} they
 provide regions of parameters~$J$ and~$\beta$
where there is spontaneous magnetization or exponential decay of the two-point functions. These regions coincide with those in 
Theorem~\ref{thm:FreeEnergy} (one can analytically extend the correlation functions to the high and low-temperature regime),
that is, if the coupling constants satisfy the low-temperature condition, then there is spontaneous magnetization on~$\Blow$,
and if they satisfy the high-temperature condition, then there is exponential decay of the two-point functions on~$\Bhigh$.
In particular, our bounds together with the methods developed in~\cite{KLM} prove that the self-dual Z-invariant weights are critical in the sense of magnetic phase transition.

We would also like to point out that the arguments, which are used in~\cite{KLM} to conclude analyticity of the free energy density limit,
do not work for general graphs since they rely on periodicity of the square lattice. 
This is why, in this paper, we go into details of this aspect of phase transition and we do not focus on the magnetic behavior of the model.

\section{Results for the Kac--Ward operator} \label{sec:KWresults}

\subsection{The Kac--Ward operator and the Ising model}

Let $\FinG$ be a finite simple graph embedded in the plane.
For a directed edge $\ve=(z,w)$, we define its \emph{tail} $t(\ve) =z$ and \emph{head} $h(\ve)=w$.
For $\ve,\vg \in \dE(\FinG)$, let
\begin{align} \label{eq:defangle}
\angle(\ve,\vg)= \text{Arg}\Big(\frac{h(\vg)-t(\vg)}{h(\ve)-t(\ve)}\Big) \in (-\pi,\pi]
\end{align}
be the \emph{turning angle} from  $\ve$ to $\vg$ (see Figure \ref{fig:isoradial}). 
The \emph{transition matrix} for~$\FinG$ and the weight system $x$ is given by
\begin{align} \label{eq:transitionmatrix1}
\TrM_{\ve,\vg}(x) = \begin{cases}
		x_{\ue} e^{\frac{i}{2}\angle(\ve,\vg)}
		& \text{if $h(\ve) = t(\vg)$ and $\vg \neq -\ve$}; \\
		0 & \text{otherwise},
	\end{cases}
\end{align}
where $\ve,\vg \in \dE(\FinG)$.
To each $\ve \in \dE(\FinG)$ we attach a copy of the complex numbers denoted by $\C_{\ve}$ and
we define a complex vector space 
\begin{align*}
\XX = \prod_{\ve \in \dE(\FinG)} \C_{\ve}.
\end{align*}
We identify $\TrM(x)$ with the automorphism of $\XX$ it defines via matrix multiplication. 
The \emph{Kac--Ward operator} for $\FinG$ and the weight system $x$ is the automorphism of $\XX$ given by
\begin{align*}
 \KW(x) = \Id -\TrM(x),
\end{align*}
where $\Id$ is the identity on $\XX$. When necessary, we will use subscripts to express the fact that the above operators depend on the underlying graph~$\FinG$.

If $\FinG$ is a finite subtiling of $\InfG$, then we will denote by $\FinG^*$ the subgraph of~$\InfG^*$ whose edge set
consists of all dual edges $e^*$, such that at least one of the endpoints of $\ue$ belongs to~$V(\FinG) \setminus \partial \FinG$. One can see that~$\FinG^*$ 
is a subtiling of~$\InfG^*$ whose defining set of dual faces is given by the vertices from $V(\FinG) \setminus \partial \FinG$.
We will call it the \emph{dual subtiling} of $\FinG$.

We say that a graph is \emph{even} if all its vertices have even degree.
There are two classical methods of representing the partition function of the Ising model on $\FinG$
as a weighted sum over all even subgraphs of $\FinG$ or $\FinG^{*}$. 
The first one, called the \emph{low-temperature expansion}, involves a bijective mapping between the spin configurations
with positive boundary conditions and the collection of even subgraphs of $\FinG^{*}$. The graph associated with a spin configuration is composed 
of these dual edges, whose corresponding primal edge has two opposite values of spins assigned to its endpoints. Hence,
the resulting even subgraph forms an interface between the clusters of positive and negative spins in the configuration.
In this expansion, each even graph is given a weight which is proportional to the product of the low-temperature
edge weights $\exp(-2 \beta J)$ taken over all edges in the graph. 
The second method is called the \emph{high-temperature expansion} and it is a way of expressing 
the partition function of the Ising model with free boundary conditions as a sum over all even subgraphs of $\FinG$.
Similarly, it assigns to each even subgraph a product weight composed of factors given by 
the high-temperature weight system $\tanh \beta J$. However, unlike in the low-temperature case, the even subgraphs
do not have a geometrical interpretation in terms of the spin variables. 
The weighted sums arising in both of these expansions are called the \emph{even subgraph generating functions}.

The Kac--Ward formula expresses the square of an even subgraph generating function
as the determinant of a Kac--Ward matrix with an appropriate edge weight system. 
The combined result of the high and low-temperature expansion together with the Kac--Ward formula is stated
in the next theorem. 
Here, we assume that the edges of $\FinG$ (and also $\FinG^{*}$) are embedded as straight line segments 
which do not intersect.
For the origin of this condition, a detailed account of the high and low-temperature expansion, and the proof of the following theorem, see~\cite{KLM}.

\begin{theorem} \label{thm:expansions}
For all choices of the coupling constants $J$ and all $\beta$ with $\RP \beta > 0$,
\begin{align*}
\quad \text{\textnormal{(i)}}& \quad \big(\ZZ^{\free}_{\FinG}(\beta) \big)^2 = 2^{2|V(\FinG)|} 
\Big( \prod_{\ue \in E(\FinG)} \cosh^2(\beta J_{\ue})\Big)\det\big[\KW_{\FinG}(\tanh \beta J)\big],\\
\quad \text{\textnormal{(ii)}}& \quad \big(\ZZ^{+}_{\FinG}(\beta) \big)^2 = \exp \Big(2\beta \sum_{\ue \in E(\FinG)} 
J_{\ue}\Big) \det\big[\KW_{\FinG^{*}}(\exp(-2 \beta J))\big].
\end{align*}
\end{theorem}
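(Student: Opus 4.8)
The plan is to reduce both identities to a single purely combinatorial statement — the Kac--Ward formula — and then to prove that statement through the logarithm of the determinant. First I would treat~(i) via the high-temperature expansion. Since $\sigma_z\sigma_w\in\{-1,+1\}$, one has $\exp(\beta J_\ue\sigma_z\sigma_w)=\cosh(\beta J_\ue)\,(1+\sigma_z\sigma_w\tanh(\beta J_\ue))$ (the hypothesis $\RP\beta>0$ guarantees $\cosh(\beta J_\ue)\neq0$, so this rewriting is legitimate), whence
\[
\ZZ^{\free}_{\FinG}(\beta)=\Big(\prod_{\ue\in E(\FinG)}\cosh(\beta J_\ue)\Big)\sum_{\sigma\in\Omega^{\free}_{\FinG}}\prod_{\{z,w\}\in E(\FinG)}\big(1+\sigma_z\sigma_w\tanh(\beta J_{\{z,w\}})\big).
\]
Expanding the product over edges indexes subsets $E'\subseteq E(\FinG)$; after summing over $\sigma$, a term survives (and then equals $2^{|V(\FinG)|}$) precisely when every vertex is incident to an even number of edges of $E'$. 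Writing $\fE(\FinG)$ for the set of such even subgraphs, this yields
\[
\ZZ^{\free}_{\FinG}(\beta)=2^{|V(\FinG)|}\Big(\prod_{\ue}\cosh(\beta J_\ue)\Big)\sum_{E'\in\fE(\FinG)}\prod_{\ue\in E'}\tanh(\beta J_\ue).
\]
Squaring, part~(i) follows once we establish the determinantal identity $\det[\KW_{\FinG}(x)]=\big(\sum_{E'\in\fE(\FinG)}\prod_{\ue\in E'}x_\ue\big)^2$ with $x=\tanh\beta J$.

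For~(ii) I would use the low-temperature expansion. A configuration $\sigma\in\Omega^+_{\FinG}$ is encoded by the set of dual edges separating disagreeing spins; because $\FinG$ is a simply connected subtiling and the boundary spins are fixed to $+1$, this is a bijection onto $\fE(\FinG^*)$. As the energy splits as $\sum_\ue J_\ue-2\sum_{\text{disagreeing }\ue}J_\ue$, the Boltzmann weight of $\sigma$ equals $\exp(\beta\sum_\ue J_\ue)\prod_{\ue^*\in F}\exp(-2\beta J_\ue)$ for the corresponding $F\in\fE(\FinG^*)$, so that
\[
\ZZ^+_{\FinG}(\beta)=\exp\Big(\beta\sum_{\ue}J_\ue\Big)\sum_{F\in\fE(\FinG^*)}\prod_{\ue^*\in F}\exp(-2\beta J_\ue).
\]
Squaring reduces~(ii) to the same determinantal identity, now for the dual subtiling $\FinG^*$ with $x=\exp(-2\beta J)$.

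It therefore remains to prove the Kac--Ward formula $\det[\Id-\TrM(x)]=\big(\sum_{E'}\prod_{\ue\in E'}x_\ue\big)^2$, the sum being over even subgraphs of the underlying graph. I would start from $\det[\Id-\TrM(x)]=\exp\big(\trace\log[\Id-\TrM(x)]\big)=\exp\big(-\sum_{n\ge1}\tfrac1n\trace\TrM(x)^n\big)$, which holds whenever the spectral radius of $\TrM(x)$ is below $1$ and then extends to all $x$ by analyticity, both sides being polynomials in the entries. By the definition~\eqref{eq:transitionmatrix1} of $\TrM$, the quantity $\trace\TrM(x)^n$ is the sum over closed non-backtracking walks of length $n$ of $\big(\prod_{j=1}^n x_{\ue_j}\big)\exp\big(\tfrac i2\sum_{j=1}^n\angle(\ve_j,\ve_{j+1})\big)$, with the cyclic convention $\ve_{n+1}=\ve_1$ and $\ue_j$ the undirected edge of $\ve_j$. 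Grouping the rooted walks into oriented loops (classes under cyclic rotation of the basepoint) absorbs the factor $1/n$, and exponentiating converts the single-loop sum into a signed sum over finite multisets of loops.

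The decisive input is the topological identity of Whitney: for a closed reduced walk, the phase $\exp\big(\tfrac i2\sum_j\angle(\ve_j,\ve_{j+1})\big)$ equals $(-1)$ raised to the rotation number, which equals $(-1)^{1+c}$ with $c$ the number of transversal self-intersections of the walk drawn with straight edges — and this is exactly where the non-crossing, straight-line embedding hypothesis is used. Substituting this into the multiset expansion, I would exhibit a sign-reversing involution cancelling every loop configuration that does not trace out an even subgraph, leaving each even subgraph with all-positive sign and with the multiplicity that produces the square. I expect this final step — controlling the cancellations and recovering precisely the square of the even subgraph generating function — to be the main obstacle, since it couples the combinatorics of how loop multisets decompose along shared edges with Whitney's crossing count; this is the technical heart carried out in detail in~\cite{KLM}.
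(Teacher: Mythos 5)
Your proposal takes essentially the same route as the paper, which in fact gives no proof of this theorem of its own but defers it entirely to~\cite{KLM}: your high- and low-temperature expansions are carried out correctly, and the reduction to the determinantal identity $\det\big[\Id-\TrM_{\FinG}(x)\big]=\big(\sum_{E'}\prod_{\ue\in E'}x_{\ue}\big)^2$, to be proved via the trace--log expansion, Whitney's theorem and a sign-cancellation argument, is precisely the strategy of that reference. The single step you leave open --- the sign-reversing cancellation that recovers the square of the even subgraph generating function --- is exactly the technical core supplied by~\cite{KLM}, i.e.\ the same point at which the paper itself cites rather than proves.
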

Note that the condition $\RP \beta >0$ is needed only for the weight system $\tanh \beta J$ to be well defined.

The determinant of the Kac--Ward matrix is the characteristic polynomial of the transition matrix evaluated at one:
\begin{align*}
\det \KW = \det(\Id -\TrM) = \prod_{k=1}^{2n}(1- \lambda_k),
\end{align*}
where $n$ is the number of edges of $\FinG$, and $\lambda_k$, $k\in \{1,2,\ldots, 2n \}$, are the eigenvalues of $\TrM$. 
Recall that we want to extend the free energy density functions to domains in the complex plane.  
The free energy density is given by the logarithm of the partition function, and the square of the partition function is proportional to the above product
involving eigenvalues of the transition matrix.
In this situation, it is natural to use the power series expansion of the logarithm around one:
\begin{align*}
\ln (1- \lambda) = -\sum_{r=1}^{\infty} \lambda^r/r, \qquad |\lambda| <1.
\end{align*}
This series is convergent whenever $\lambda$ stays within the unit disc, and hence we should require that the spectral radius of the
transition matrix is bounded from above by one. The next section is devoted to providing the necessary estimates.

\subsection{Bounds on the spectral radius and operator norm}
In this paper we will make use of transition matrices conjugated by diagonal matrices of a certain type:
if $x$ factorizes to $\dx$ (see~\eqref{eq:factorizes}), then we define the \emph{conjugated transition matrix} by
\begin{align*}
 \TrM(\dx)= D^{-1}(\dx) \TrM(x) D(\dx),
\end{align*}
where $D(\dx)$ is the diagonal matrix satisfying $D_{\ve,\ve}(\dx)=\dx_{\ve}$ for all $\ve \in \dG$. 
The resulting transition matrix takes the following form:
\begin{align} \label{eq:transitionmatrix2}
\TrM_{\ve,\vg}(\dx)= \begin{cases}
		\dx_{-\ve}\dx_{\vg} e^{\frac{i}{2}\angle(\ve,\vg)}
		& \text{if $h(\ve) = t(\vg)$ and $\vg \neq -\ve$}; \\
		0 & \text{otherwise}.
	\end{cases}
\end{align}
This matrix is similar to the standard transition matrix, and in particular has the same spectrum. 
Moreover, it turns out that one can explicitly compute its operator norm.

To this end, let us make some additional observations. 
For a square matrix~$A$, let~$\| A \|$ be its operator norm induced by the Euclidean norm, 
and let $\rho(A)$ be its spectral radius.
Note that there is a natural involutive automorphism $P$ of $\XX$ induced by the map $\ve \mapsto -\ve$, i.e.\ 
the automorphism which assigns to each complex number in $\C_{\ve}$ the same complex number in~$\C_{-\ve}$. 
Fix $\dx$ and let $\TrMa= P \TrM(\dx)$. Observe that $\| \TrMa \| =\|  \TrM(\dx) \|$ since $P$ is an isometry.  
Moreover, the operator norm of $\TrMa$ depends only on the absolute values of $\dx$. Indeed, if
\begin{align*}
\TrMb  =  D(\vec{u})  \TrMa D(\vec{u}), \qquad \text{where} \quad \vec{u}_{\ve} = |\dx_{\ve}|/\dx_{\ve},
\end{align*} 
then $\TrMb$ is given by the matrix
\begin{align} \label{eq:transitionmatrixb}
\TrMb_{\ve,\vg} = \begin{cases}
		|\dx_{\ve}\dx_{\vg}| e^{\frac{i}{2}\angle(-\ve,\vg)}
		& \text{if $t(\ve) = t(\vg)$ and $\vg \neq \ve$}; \\
		0 & \text{otherwise},
	\end{cases}
\end{align}
and $\| \TrMb\| = \|\TrMa\|$ since $D(\vec{u})$ is an isometry.

Note that $\XX$ can be decomposed as
\begin{align*}
\XX = \prod_{z \in V(\FinG)} \XX^z, \qquad \text{where} \quad \XX^z = \prod_{\ve \in \Out_{\FinG}(z)} \C_{\ve}.
\end{align*}
One can see from \eqref{eq:transitionmatrixb} that $\TrMb$ gives a nonzero transition weight only between two edges sharing the same 
tail $z$. In other words, $\TrMb$ maps $\XX^z$ to itself and therefore is block-diagonal, that is
\begin{align*}
\TrMb = \prod_{z \in V(\FinG)} \TrMb^z,
\end{align*}
where $\TrMb^z : \XX^z \rightarrow \XX^z$
is the restriction of $\TrMb$ to the space $\XX^z$. Moreover, the angles satisfy
\begin{align} \label{eq:anglereflection}
\angle(-\ve,\vg) = - \angle(-\vg,\ve) \qquad \text{for} \quad \ve \neq \vg,
\end{align}  
and hence $\TrMb$ is Hermitian, i.e.\ $\TrMb_{\ve,\vg} = \overline{\TrMb_{\vg,\ve}}$.
Combining these two properties and the fact that the operator norm of a Hermitian matrix is given by its spectral radius, we arrive at 
the identity:
\begin{align} \label{eq:blockhermitian}
\| \TrMb \| = \rho(\TrMb)= \max_{z \in V(\FinG)} \rho(\TrMb^z).
\end{align} 

It turns out that the characteristic polynomial of $\TrMb^z$ is easily expressible in terms of the weight vector $\dx$:
\begin{lemma} \label{lem:charpol}
For any real $t$ and any vertex $z$,
\begin{align*}
\det(t\Id- \TrMb^z) = \RP \Big(\prod_{\ve \in \text{\textnormal{Out}}_{\FinG}(z)} (t + i|\dx_{\ve}|^2) \Big),
\end{align*}
where $\Id$ is the identity on $\XX^z$.
\end{lemma}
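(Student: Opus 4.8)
The plan is to strip away all of the planar geometry and reduce the statement to a clean linear-algebra identity about one explicit matrix, and then to prove that identity by induction on the degree $d$ of the vertex $z$.

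First I would order the edges of $\Out_{\FinG}(z)$ as $\ve_1,\dots,\ve_d$ by the argument $\alpha_j\in[0,2\pi)$ of their direction $h(\ve_j)-z$, so that $\alpha_1<\dots<\alpha_d$, and set $a_j=|\dx_{\ve_j}|$, $s_j=a_j^2$. By \eqref{eq:transitionmatrixb} the off-diagonal entries of $\TrMb^z$ are $a_ja_k\,e^{\frac{i}{2}\angle(-\ve_j,\ve_k)}$ and the diagonal entries vanish. The key preliminary step is to resolve the half-angle: since $-\ve_j$ points in direction $\alpha_j+\pi$, one has $\angle(-\ve_j,\ve_k)\equiv\alpha_k-\alpha_j-\pi\pmod{2\pi}$, and bringing the representative into $(-\pi,\pi]$ gives
\[
e^{\frac{i}{2}\angle(-\ve_j,\ve_k)}=\mathrm{sgn}(j-k)\,i\,e^{\frac{i}{2}(\alpha_k-\alpha_j)},\qquad j\neq k.
\]
I expect this sign bookkeeping to be the delicate point, since the factor $\mathrm{sgn}(j-k)$ comes precisely from whether one must add $2\pi$ to land in $(-\pi,\pi]$. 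Writing $w_j=a_j e^{i\alpha_j/2}$, this reads $(\TrMb^z)_{jk}=i\,\mathrm{sgn}(j-k)\,\overline{w_j}\,w_k$, so conjugating $\TrMb^z$ by $\mathrm{diag}(w_1,\dots,w_d)$ produces $iD\Omega$, where $D=\mathrm{diag}(s_1,\dots,s_d)$ and $\Omega$ is the real skew-symmetric matrix with $\Omega_{jk}=\mathrm{sgn}(j-k)$. As conjugation preserves the characteristic polynomial, it remains to prove the purely algebraic identity
\[
\det(t\,\Id-iD\Omega)=\RP\prod_{j=1}^{d}(t+is_j).
\]

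I would establish this by induction on $d$. Write $\Phi^{(d)}=t\,\Id-iD\Omega$ and note the crucial nesting: because $\Omega_{jk}=\mathrm{sgn}(j-k)$ and the $s_j$ are independent of $d$, the top-left $d\times d$ block of $\Phi^{(d+1)}$ is exactly $\Phi^{(d)}$, while its last column is $i\mathbf{s}$ and its last row is $-is_{d+1}\mathbf{1}^{T}$ (here $\mathbf{s}=(s_1,\dots,s_d)^{T}$ and $\mathbf{1}$ is the all-ones vector). Put $F_d=\det\Phi^{(d)}$ and introduce the auxiliary quantity $H_d=\mathbf{1}^{T}\mathrm{adj}(\Phi^{(d)})\mathbf{s}$. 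Taking the Schur complement of the $(d+1,d+1)$ entry, together with $\mathrm{adj}(\Phi^{(d)})=F_d\,(\Phi^{(d)})^{-1}$ and $(-i)(i)=1$, yields the first recursion $F_{d+1}=tF_d-s_{d+1}H_d$.

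The companion recursion $H_{d+1}=tH_d+s_{d+1}F_d$ is obtained from the bordered-determinant identity $\det\left(\begin{smallmatrix}A&b\\ c^{T}&\delta\end{smallmatrix}\right)=\delta\det A-c^{T}\mathrm{adj}(A)b$, which gives $H_d=-\det\left(\begin{smallmatrix}\Phi^{(d)}&\mathbf{s}\\ \mathbf{1}^{T}&0\end{smallmatrix}\right)$. Forming the analogous $(d+2)\times(d+2)$ bordered determinant for $H_{d+1}$, clearing its $(d+1)$-st column by the operation $\mathrm{col}_{d+1}\mapsto\mathrm{col}_{d+1}-i\,\mathrm{col}_{d+2}$, and expanding along that column leaves exactly two terms, $(t-is_{d+1})(-H_d)$ and $-s_{d+1}(F_d+iH_d)$, whose sum is $-(tH_d+s_{d+1}F_d)$; since $H_{d+1}$ equals minus this determinant, the recursion follows. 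With the base case $F_1=t$, $H_1=s_1$, both $F_d$ and $H_d$ are real because the two recursions have real coefficients, and the recursions say precisely that $F_{d+1}+iH_{d+1}=(t+is_{d+1})(F_d+iH_d)$. Hence $F_d+iH_d=\prod_{j=1}^{d}(t+is_j)$, and taking real parts gives $F_d=\RP\prod_{j=1}^{d}(t+is_j)$, which is the assertion. The only genuinely nontrivial step is the geometric reduction of the first paragraph; once $\TrMb^z$ is in the form $iD\Omega$, the determinant is entirely combinatorial.
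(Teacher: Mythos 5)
Your proof is correct --- I checked the sign resolution $e^{\frac{i}{2}\angle(-\ve_j,\ve_k)}=\mathrm{sgn}(j-k)\,i\,e^{\frac{i}{2}(\alpha_k-\alpha_j)}$ in both cases $j<k$ and $j>k$, the nesting of $\Phi^{(d)}$ inside $\Phi^{(d+1)}$, both recursions, and the base case --- but it takes a genuinely different route from the paper's. The paper also inducts on the degree of $z$, yet it never leaves the matrix $S=t\Id-\TrMb^z$: it uses the angle identity $\angle(-\ve_1,\vg)=\angle(-\ve_2,\vg)+\angle(-\ve_1,\ve_2)+\pi$ (the same geometric fact your $\mathrm{sgn}(j-k)$ bookkeeping encodes) to make consecutive rows and columns of $S$ ``almost proportional'', clears the first row and column by one row and one column operation, and expands to get $\det S=a\det S_1-|b|^2\det S_2$, where $S_1,S_2$ are the principal submatrices obtained by deleting one, respectively two, rows and columns; the induction hypothesis is applied to both of them, and the product formula is confirmed by a final check that the resulting identity is real-linear in the auxiliary product $\vartheta$ (verified at $\vartheta=1,i$). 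Your argument instead pays the geometric cost exactly once, up front, and then gauges the half-angle phases away via the diagonal conjugation taking $\TrMb^z$ to $iD\Omega$, reducing everything to a universal matrix that depends only on the moduli $s_j$ and the cyclic order of the edges --- a structural fact the paper never makes explicit. From there your coupled first-order recursions $F_{d+1}=tF_d-s_{d+1}H_d$ and $H_{d+1}=tH_d+s_{d+1}F_d$, packaged as $F_{d+1}+iH_{d+1}=(t+is_{d+1})(F_d+iH_d)$, make the real part of the product emerge automatically rather than being verified at the end. The trade-off: the paper's induction is second order (its step consumes the claim for degrees $n$ and $n-1$, hence the two base cases) but involves no objects beyond determinants of principal minors, while yours is first order and cleanly separates geometry from linear algebra, at the price of carrying the adjugate quantity $H_d$ and the bordered-determinant manipulations needed to propagate it.
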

\begin{proof}
The proof is by induction on the degree of $z$.
One can easily check that the statement is true for all vertices of degree one or two.
Now suppose that it is true for all vertices of degree at most $n \geq 2$. Let~$z$ be a vertex of degree $n+1$ and
let $\ve_1,\ve_2, \ldots, \ve_{n+1}$ be a counterclockwise ordering of the edges of $\Out_{\FinG}(z)$. Consider the matrix $S=t\Id - \TrMb^z$ 
with columns and rows ordered accordingly. 
Note that for all $\vg \in \Out_{\FinG}(z)$ different from $\ve_1$ and~$\ve_2$,
\begin{align*} 
\angle(\vg,\ve_1)+  \angle(\ve_1, \ve_2) + \angle(\ve_2,\vg) = 0 \ (\text{mod} \ 2 \pi).
\end{align*}
Also observe that, for geometric reasons, at least two of the above angles are positive.
Combining this together with the fact that $\text{Arg}(w) = \text{Arg}(-w) \pm \pi$ for any complex~$w$,
and that the angles are between~$-\pi$ and~$\pi$, yields
\begin{align} \label{eq:charpol1}
\angle(-\ve_1,\vg) = \angle(-\ve_2,\vg) + \angle(-\ve_1, \ve_2) + \pi.
\end{align}
This identity guarantees that every two consecutive rows and columns of $S$ are ``almost proportional'' to each other.

To be more precise,
we first subtract from the first row of $S$, the second row multiplied by $i e^{\frac{i}{2} \angle(-\ve_1, \ve_2)} |\dx_{\ve_1}| /|\dx_{\ve_2}| $. Then, we subtract from the first 
column the second one multiplied by $-i e^{-\frac{i}{2} \angle(-\ve_1, \ve_2)}  |\dx_{\ve_1}| /|\dx_{\ve_2}|$. The resulting matrix has the same determinant as $S$. By the definition of~$\TrMb^z$, 
\eqref{eq:anglereflection} and~\eqref{eq:charpol1},
\begin{align*} \det S =
\det \begin{pmatrix}
 a & b & 0 & 0 &\cdots \\ 
\overline{b} & t & -\TrMb^z_{\ve_2,\ve_3} & -\TrMb^z_{\ve_2,\ve_4} & \cdots \\
0 & -\overline{\TrMb^z}_{\ve_2,\ve_3} & t &  -\TrMb^z_{\ve_3,\ve_4} & \cdots \\
0 & -\overline{\TrMb^z}_{\ve_2,\ve_4} & -\overline{\TrMb^z}_{\ve_3,\ve_4} & t & \cdots \\
\vdots & \vdots & \vdots & \vdots &\ddots
\end{pmatrix}
\end{align*}
where $a=t\big(1 + |\dx_{\ve_1}|^2 /|\dx_{\ve_2}|^2 \big)$ and $b = -e^{\frac{i}{2} \angle(-\ve_1,\ve_2)} \big(it|\dx_{\ve_1}| /|\dx_{\ve_2}|+|\dx_{\ve_1}\dx_{\ve_2}|\big)$.
Let $S_1$ be the matrix resulting from removing from $S$ the first column and the first row, and let $S_2$ be the matrix,
where the first two rows and the first two columns of $S$ are removed. By the induction hypothesis, $\det S_1 = \RP\big((t+i|\dx_{\ve_2}|^2)\vartheta\big)$
and $S_2= \RP \vartheta$, where $\vartheta = \prod_{\vg \in \Out_{\FinG}(z) \setminus \{\ve_1, \ve_2 \}}(t+i|\dx_{\vg}|^2)$. Expanding the determinant,
we get 
\begin{align*}
\det S &= a \det S_1 - b \overline{b} \det S_2 \\ 
		&= t\big(1 +  |\dx_{\ve_1}|^2 /|\dx_{\ve_2}|^2 \big) \RP\big((t+i|\dx_{\ve_2}|^2)\vartheta\big) \\ 
		&- \big(|\dx_{\ve_1}|^2|\dx_{\ve_2}|^2 + t^2 |\dx_{\ve_1}|^2/|\dx_{\ve_2}|^2\big) \RP\vartheta \\
		&= \RP \big( ( t + i |\dx_{\ve_1}|^2) (t+ i |\dx_{\ve_2}|^2) \vartheta \big).
\end{align*}
The last equality follows since both sides are real linear in $\vartheta$, and one can check that it holds true for $\vartheta=1,i$.
\end{proof}
For $z \in V(\FinG)$, we define $\xi^z(\dx)$ to be the unique solution in~$s$ of the equation 
\begin{align} \label{eq:solutiondef}
\sum_{\ve \in \text{\textnormal{Out}}_{\FinG}(z)}\arctan \big(|\dx_{\ve}|^2/s\big)= \frac{\pi}{2}.
\end{align}
As a corollary we obtain the following result:
\begin{corollary} \label{cor:HTrMrho} 
 $\rho(\TrMb^z) = \xi^z(\dx)$.
\end{corollary}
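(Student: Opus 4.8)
The plan is to read off the eigenvalues of $\TrMb^z$ from the characteristic polynomial supplied by Lemma~\ref{lem:charpol} and to match its largest root in absolute value with the quantity $\xi^z(\dx)$ defined in~\eqref{eq:solutiondef}. Write $a_{\ve}=|\dx_{\ve}|^2$ and $p(t)=\det(t\Id-\TrMb^z)=\RP\big(\prod_{\ve\in\Out_{\FinG}(z)}(t+ia_{\ve})\big)$. Since $\TrMb^z$ is Hermitian (being a diagonal block of the Hermitian matrix $\TrMb$), all its eigenvalues are real, so $\rho(\TrMb^z)$ is just the largest modulus among the real roots of $p$. Because the weight system $\dx$ has nonzero entries, every $a_{\ve}>0$, which is what makes the monotonicity used below strict.

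The first step is a polar-coordinate rewriting valid for $t>0$: each factor satisfies $t+ia_{\ve}=\sqrt{t^2+a_{\ve}^2}\,e^{i\arctan(a_{\ve}/t)}$ with $\arctan(a_{\ve}/t)\in(0,\pi/2)$, so that
\[
p(t)=\Big(\prod_{\ve\in\Out_{\FinG}(z)}\sqrt{t^2+a_{\ve}^2}\Big)\cos\Big(\sum_{\ve\in\Out_{\FinG}(z)}\arctan(a_{\ve}/t)\Big).
\]
As the product in front is strictly positive, the positive roots of $p$ are exactly the $t>0$ where the angle sum $g(t):=\sum_{\ve}\arctan(a_{\ve}/t)$ lies in $\tfrac{\pi}{2}+\pi\mathbb{Z}$. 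I would then note that $g$ is continuous and strictly decreasing on $(0,\infty)$, since each summand has derivative $-a_{\ve}/(t^2+a_{\ve}^2)<0$, with $g(t)\to |\Out_{\FinG}(z)|\cdot\tfrac{\pi}{2}$ as $t\to0^+$ and $g(t)\to0$ as $t\to\infty$. Hence the equation $g(t)=\tfrac{\pi}{2}$ has the unique solution $\xi^z(\dx)$ guaranteed by~\eqref{eq:solutiondef}, and this is the \emph{largest} positive root of $p$: any larger $t$ yields $g(t)<\tfrac{\pi}{2}$ and thus $p(t)\neq0$, while the remaining positive roots correspond to the larger angle-values $\tfrac{3\pi}{2},\tfrac{5\pi}{2},\dots$, attained at strictly smaller $t$.

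It remains to rule out negative eigenvalues of larger modulus, and for this I would use a parity observation. Expanding the product, the terms surviving the real part are exactly those in which an even number of factors contribute a factor $i$, so $\RP\prod_{\ve}(t+ia_{\ve})$ contains only the monomials $t^{d},t^{d-2},\dots$ with $d=|\Out_{\FinG}(z)|$; consequently $p(-t)=(-1)^{d}p(t)$, and the nonzero roots of $p$ occur in pairs $\pm t$ (with a possible root at $0$ when $d$ is odd). Therefore no eigenvalue exceeds $\xi^z(\dx)$ in absolute value, which combined with the previous paragraph gives $\rho(\TrMb^z)=\xi^z(\dx)$. The step requiring the most care is the trigonometric reduction together with the bookkeeping that isolates $\xi^z(\dx)$ as the outermost root; the symmetry $p(-t)=(-1)^{d}p(t)$ is precisely what ensures the spectral radius is controlled by this largest positive root rather than by some negative eigenvalue of larger magnitude.
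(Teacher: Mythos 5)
Your proof is correct and takes essentially the same route as the paper: both rest on Lemma~\ref{lem:charpol} plus Hermiticity of $\TrMb^z$, and both rewrite the characteristic polynomial in the polar/trigonometric form whose largest-modulus real zeros are $\pm\xi^z(\dx)$. Your monotonicity argument for the angle sum and the parity identity $p(-t)=(-1)^{d}p(t)$ simply make explicit the steps the paper compresses into the assertion that the largest-modulus zeros are $\pm\xi^z(\dx)$.
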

\begin{proof}
Since $\TrMb^z$ is Hermitian, it has a real spectrum. 
By Lemma~\ref{lem:charpol}, the characteristic polynomial of~$\TrMb^z$ at a nonzero real number~$t$ is given by
\begin{align*}
t^{|\Out_{\FinG}(z)|} \Big(\prod_{\ve \in \Out_{\FinG}(z)} \cos\big( \arctan (|\dx_{\ve}|^2/t) \big)\Big)^{-1}  \cos\Big( \sum_{\ve \in \Out_{\FinG}(z)} \arctan(|\dx_{\ve}|^2/t)\Big) .
\end{align*}
This expression vanishes only when the last cosine term is zero. The largest in modulus values of $t$ for which this happens are equal to $\pm \xi_{\FinG}^z(\dx)$.
\end{proof}

We can now compute the operator norm of the conjugated transition matrix $\TrM(\dx)$.
The following result is the main tool in our considerations:
\begin{lemma} \label{lem:normbound}
\begin{align*}
\| \TrM(\dx)\| = \max_{z \in V(\FinG)} \xi^z(\dx).
\end{align*}
\end{lemma}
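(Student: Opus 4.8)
The plan is to assemble the lemma from the norm-preserving reductions and the spectral computation already carried out just above the statement, so that essentially no new work is required. The idea is to transport the target quantity $\|\TrM(\dx)\|$, through a sequence of isometries, onto the Hermitian block-diagonal matrix $\TrMb$, whose norm has already been pinned down in terms of the quantities $\xi^z(\dx)$.

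First I would record the two norm equalities coming from the isometries introduced before the statement. Since the reversion map $\ve \mapsto -\ve$ is a bijection of $\dE(\FinG)$, the induced operator $P$ is a coordinate permutation of $\XX$ and hence unitary; consequently $\|\TrM(\dx)\| = \|P\TrM(\dx)\| = \|\TrMa\|$. Next, the diagonal matrix $D(\vec{u})$ with entries $\vec{u}_{\ve} = |\dx_{\ve}|/\dx_{\ve}$ has unimodular diagonal and is therefore unitary as well, so multiplying $\TrMa$ on both sides by it preserves the operator norm and yields $\|\TrMa\| = \|\TrMb\|$. At this stage the problem is reduced to computing $\|\TrMb\|$.

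The second step invokes the structural properties of $\TrMb$ established in the text. Because $\TrMb$ assigns a nonzero weight only between directed edges sharing a common tail, it is block-diagonal with blocks $\TrMb^z$ indexed by $z \in V(\FinG)$; because of the angle relation \eqref{eq:anglereflection} it is Hermitian. For a Hermitian operator the operator norm equals the spectral radius, and for a block-diagonal operator the spectral radius is the maximum of the spectral radii of the blocks, which is precisely the content of \eqref{eq:blockhermitian}, namely $\|\TrMb\| = \rho(\TrMb) = \max_{z} \rho(\TrMb^z)$. Finally, Corollary~\ref{cor:HTrMrho} identifies each block's spectral radius as $\rho(\TrMb^z) = \xi^z(\dx)$, where $\xi^z(\dx)$ solves the arctangent equation \eqref{eq:solutiondef}. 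Concatenating the equalities then gives $\|\TrM(\dx)\| = \max_{z} \xi^z(\dx)$, as claimed.

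I do not expect a genuine obstacle here: the analytic heart of the argument, namely the evaluation of the characteristic polynomial of $\TrMb^z$ in Lemma~\ref{lem:charpol} and its conversion, via the product-to-cosine identity, into the defining equation \eqref{eq:solutiondef} for $\xi^z(\dx)$, has already been completed. The only points that demand care are bookkeeping ones: confirming that each inserted matrix ($P$ and $D(\vec{u})$) is genuinely unitary so that the Euclidean operator norm is preserved at every step, and that the passage from the spectral radius of the full operator $\TrMb$ to the maximum over its diagonal blocks is legitimate. Both are routine, and the lemma follows immediately by chaining the established identities.
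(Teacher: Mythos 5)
Your proposal is correct and follows exactly the paper's argument: the paper's own proof is precisely the chain $\| \TrM(\dx)\| = \|\TrMb\|$ (via the isometries $P$ and $D(\vec{u})$), identity \eqref{eq:blockhermitian} for the Hermitian block-diagonal matrix $\TrMb$, and Corollary~\ref{cor:HTrMrho}. The additional care you take in verifying that $P$ and $D(\vec{u})$ are unitary is exactly the bookkeeping the paper leaves implicit in the discussion preceding the lemma.
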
 
\begin{proof}
It follows from  the fact that $\|\TrM(\dx)\| =\|\TrMb \|$, identity \eqref{eq:blockhermitian}, and Corollary~\ref{cor:HTrMrho}.
\end{proof}

Note that the operator norm depends only on the absolute values of $\dx$.
One can rephrase this result as follows:
\begin{corollary} \label{cor:contraction}
$\| \TrM(\dx) \| \leq s$ if and only if
\begin{align*} 
 \sum_{\ve \in \text{\textnormal{Out}}_{\FinG}(z)}\arctan \big(|\dx_{\ve}|^2/s\big) \leq \frac{\pi}{2} \qquad \text{for all} \ z \in V(\FinG).
\end{align*}
\end{corollary}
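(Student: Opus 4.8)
The plan is to obtain this as an immediate reformulation of Lemma~\ref{lem:normbound}, using only the monotonicity in $s$ of the left-hand side of the defining equation~\eqref{eq:solutiondef}. Throughout I would take $s>0$, which is the relevant regime since $\|\TrM(\dx)\|\geq 0$.

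First I would apply Lemma~\ref{lem:normbound} to replace $\|\TrM(\dx)\|$ by $\max_{z\in V(\FinG)}\xi^z(\dx)$. The inequality $\|\TrM(\dx)\|\leq s$ then becomes $\max_{z}\xi^z(\dx)\leq s$, which is equivalent to requiring $\xi^z(\dx)\leq s$ simultaneously for every vertex $z$. This reduces the corollary to proving, for each fixed $z$, the single-vertex equivalence
\begin{align*}
\xi^z(\dx)\leq s \quad\Longleftrightarrow\quad \sum_{\ve \in \Out_{\FinG}(z)}\arctan\big(|\dx_{\ve}|^2/s\big)\leq \frac{\pi}{2}.
\end{align*}

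To prove this equivalence I would fix $z$ and examine the function $h(s')=\sum_{\ve\in\Out_{\FinG}(z)}\arctan(|\dx_{\ve}|^2/s')$ on $(0,\infty)$. Each summand is $\arctan$ (strictly increasing) composed with $s'\mapsto|\dx_{\ve}|^2/s'$ (strictly decreasing, since the weights are nonzero so $|\dx_{\ve}|^2>0$); hence $h$ is continuous and strictly decreasing. By~\eqref{eq:solutiondef}, $\xi^z(\dx)$ is exactly the argument at which $h$ equals $\pi/2$, and strict monotonicity both pins this value down uniquely (when $\deg z\geq 2$) and turns the sublevel set $\{s':h(s')\leq\pi/2\}$ into a ray $[\xi^z(\dx),\infty)$. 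Since $h$ strictly decreases, $\xi^z(\dx)\leq s$ is \emph{equivalent} to $h(\xi^z(\dx))\geq h(s)$, i.e.\ to $\pi/2\geq h(s)$, which is precisely the claimed inequality. Conjoining over all $z$ finishes the argument.

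I do not anticipate any real obstacle: the statement is essentially Lemma~\ref{lem:normbound} read through a monotone change of variable, and the only point demanding attention is keeping track of the reversal of the inequality across the decreasing function $h$. The one degenerate case worth a sentence is a vertex of degree one, where $h(s')<\pi/2$ for all $s'>0$ and correspondingly $\xi^z(\dx)=\rho(\TrMb^z)=0$; here both sides of the single-vertex equivalence hold for every $s>0$, so the constraint is vacuous and does not disturb the conclusion.
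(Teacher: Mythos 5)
Your proposal is correct and matches the paper's intended argument: the paper states this corollary as a direct rephrasing of Lemma~\ref{lem:normbound}, and your monotonicity argument (that $s'\mapsto\sum_{\ve}\arctan(|\dx_{\ve}|^2/s')$ is strictly decreasing, so $\xi^z(\dx)\leq s$ is equivalent to the sum at $s$ being at most $\pi/2$) is exactly the step the paper leaves implicit. Your handling of the degree-one case is a sensible extra precaution and consistent with $\rho(\TrMb^z)=0$ there.
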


We say that an operator is a \emph{contraction} if its operator norm is smaller or equal one, and hence the name of condition \eqref{eq:contractive}.
Since the operator norm bounds the spectral radius from above, we obtain the following corollary:
\begin{corollary} \label{cor:spectralbound}
If $x$ factorizes to $\dx$, then
\begin{align*} 
\rho(\TrM(x) ) \leq  \max_{z \in V(\FinG)} \xi^z(\dx).
\end{align*}
\end{corollary}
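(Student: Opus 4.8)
The plan is to deduce the bound by chaining together three facts already in hand, since the substantive work has been carried out in Lemmas~\ref{lem:charpol} and~\ref{lem:normbound}. The key observation is that the conjugated transition matrix $\TrM(\dx)$ is a genuine conjugate of the standard transition matrix $\TrM(x)$, so that the passage from $\TrM(x)$ to $\TrM(\dx)$ leaves the spectrum, and in particular the spectral radius, unchanged.

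First I would check that the conjugation is legitimate. By hypothesis $x$ factorizes to $\dx$, so $x_{\ue} = \dx_{\ve}\dx_{-\ve}$ for a system $\dx$ of \emph{nonzero} complex weights; hence every diagonal entry $\dx_{\ve}$ is nonzero and the diagonal matrix $D(\dx)$ is invertible. Consequently $\TrM(\dx) = D^{-1}(\dx)\TrM(x)D(\dx)$ is similar to $\TrM(x)$, the two matrices share the same spectrum, and in particular $\rho(\TrM(x)) = \rho(\TrM(\dx))$.

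Next I would invoke the elementary inequality $\rho(A) \leq \| A \|$, valid for every square matrix with respect to the operator norm, applied to $A = \TrM(\dx)$. Combining this with Lemma~\ref{lem:normbound}, which evaluates $\| \TrM(\dx) \| = \max_{z \in V(\FinG)} \xi^z(\dx)$, yields the chain
\begin{align*}
\rho(\TrM(x)) = \rho(\TrM(\dx)) \leq \| \TrM(\dx) \| = \max_{z \in V(\FinG)} \xi^z(\dx),
\end{align*}
which is exactly the asserted bound.

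There is no real obstacle here: every ingredient is either a standard fact from linear algebra or a result established earlier in this section. The only point demanding a moment of care is the invertibility of $D(\dx)$, since it is precisely what legitimizes the similarity transformation and hence the equality of spectral radii; this is guaranteed by the nonvanishing of the weights $\dx_{\ve}$.
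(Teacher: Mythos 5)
Your proposal is correct and follows the same route as the paper: the similarity $\TrM(\dx) = D^{-1}(\dx)\TrM(x)D(\dx)$ preserves the spectrum, and the bound follows from $\rho \leq \|\cdot\|$ together with Lemma~\ref{lem:normbound}, which is precisely how the paper deduces the corollary. Your extra remark on the invertibility of $D(\dx)$ is a reasonable point of care, and it is indeed guaranteed since the paper defines $\dx$ as a system of \emph{nonzero} complex weights.
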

This inequality is preserved when one takes the infimum over all factorizations of the weight system $x$. One can check 
that the spectral radius of the transition matrix depends not only on the moduli but also on the complex arguments of $x$.
Since the above bound depends only on the absolute values, it is in general not sharp. Nonetheless, it is optimal for the self-dual
Z-invariant Ising model on isoradial graphs.

\begin{remark}
Note that finiteness of $\FinG$ was not important in our computations. Since the transition matrix is defined locally 
for each vertex, we only used the fact that all vertices have finite degree. Hence, one can consider transition matrices and 
Kac--Ward operators on infinite graphs as automorphisms of the Hilbert space $\ell^2$ on the directed edges of $\FinG$. 
The results from this section translate directly to this setting by interchanging all maxima with suprema. 
This is used in \cite{Lis} to analyse infinitely dimensional Kac--Ward operators.
\end{remark}

\subsection{High and low-temperature spectral radii} 
We will now use the bounds from the previous section in a more concrete setting of the high and low-temperature weight systems.
We define 
\begin{align*}
R(\beta) =\sup_{\FinG} \rho\big[\TrM_{\FinG}(\tanh \beta J)\big] \quad \text{and} \quad R^*(\beta) =\sup_{\FinG} \rho\big[\TrM_{\FinG^{*}}(\exp(-2 \beta J))\big],
\end{align*}
where the suprema are taken over all finite subtilings of $\InfG$. 
The reason for our particular choice of the high and low-temperatures regimes in the statement of Theorem \ref{thm:FreeEnergy} is the following result:
\begin{theorem} \label{thm:boundedradius}
If the coupling constants satisfy
\begin{itemize}
\item[(i)] the high-temperature condition, then $\sup_{\beta \in K} R(\beta) < 1$ for any compact set $K \subset \Bhigh$.
\item[(ii)] the low-temperature condition, then $\sup_{\beta \in K} R^*(\beta) < 1$ for any compact set $K \subset \Blow$.
\end{itemize}
\end{theorem}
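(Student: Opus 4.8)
The plan is to bound each $\rho[\TrM_{\FinG}(\cdot)]$ by the operator norm of a well-chosen conjugated transition matrix, and then exploit the contraction criterion of Corollary~\ref{cor:contraction}. I will treat (i) in detail; (ii) is analogous and easier.

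For (i), the high-temperature condition furnishes a contractive factorization $\dy$ of $\tanh J$ on $\dE(\InfG)$, i.e.\ $\tanh J_{\ue}=\dy_{\ve}\dy_{-\ve}$ and $\sum_{\ve\in\Out_{\InfG}(z)}\arctan|\dy_{\ve}|^2\le\pi/2$. For complex $\beta$, I would factorize the weight system $\tanh\beta J$ by setting $\dx_{\ve}=\dy_{\ve}\sqrt{\tanh(\beta J_{\ue})/\tanh J_{\ue}}$, which is legitimate since $\tanh J_{\ue}>0$; then $\dx$ factorizes $\tanh\beta J$ and $|\dx_{\ve}|^2=\lambda_{\ue}\,|\dy_{\ve}|^2$ with $\lambda_{\ue}=|\tanh\beta J_{\ue}|/\tanh J_{\ue}$. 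By Corollary~\ref{cor:spectralbound} and Lemma~\ref{lem:normbound}, $\rho[\TrM_{\FinG}(\tanh\beta J)]\le\max_{z}\xi^z(\dx)=\|\TrM_{\FinG}(\dx)\|$, so it suffices to bound these norms uniformly below $1$ over $\FinG$ and over $\beta$ in a fixed compact $K\subset\Bhigh$.

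The crux is the estimate $\theta:=\sup_{\beta\in K}\sup_{J\in[m,M]}\lambda(\beta,J)<1$. Writing $\beta=\RP\beta+i\,\IP\beta$ and using $|\tanh(u+iv)|^2=(\cosh 2u-\cos 2v)/(\cosh 2u+\cos 2v)$, a direct computation shows that $\lambda_{\ue}\le 1$ is equivalent to the pointwise-in-$J$ inequality $\cosh(2\RP\beta\,J)\le\cosh(2J)\cos(2\IP\beta\,J)$. The main obstacle is to deduce this from the single regime condition, which only involves $m$ and $M$. Here I would observe that $\cosh(2\RP\beta\,J)/\cosh(2J)$ is decreasing in $J$ (because $\RP\beta<1$) while $1/\cos(2\IP\beta\,J)$ is increasing in $J$ (because $2M|\IP\beta|<\pi/2$ keeps $2\IP\beta\,J$ inside $(-\pi/2,\pi/2)$); since both factors are positive, the product is bounded by evaluating the first factor at $J=m$ and the second at $J=M$, which yields exactly $\cosh(2m\RP\beta)/\big(\cosh(2m)\cos(2M\IP\beta)\big)<1$ by the definition of $\Bhigh$. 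Thus $\lambda(\beta,J)<1$ throughout $\Bhigh\times[m,M]$; since $\lambda$ is continuous and $K\times[m,M]$ is compact, $\theta<1$.

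To conclude, note that the restriction of $\dy$ to $\FinG$ is still contractive, because $\Out_{\FinG}(z)\subseteq\Out_{\InfG}(z)$. Taking $s=\theta$ in Corollary~\ref{cor:contraction}, for every vertex $z$ of $\FinG$,
\[
\sum_{\ve\in\Out_{\FinG}(z)}\arctan\!\big(|\dx_{\ve}|^2/\theta\big)=\sum_{\ve\in\Out_{\FinG}(z)}\arctan\!\big((\lambda_{\ue}/\theta)|\dy_{\ve}|^2\big)\le\sum_{\ve\in\Out_{\FinG}(z)}\arctan|\dy_{\ve}|^2\le\frac{\pi}{2},
\]
using $\lambda_{\ue}\le\theta$. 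Hence $\|\TrM_{\FinG}(\dx)\|\le\theta$ for every finite subtiling, so $R(\beta)\le\theta$ and $\sup_{\beta\in K}R(\beta)\le\theta<1$. For (ii), $|\exp(-2\beta J)|=\exp(-2\RP\beta\,J)$ is independent of $\IP\beta$, so starting from the contractive factorization of $\exp(-2J)$ given by the low-temperature condition, the analogous ratio is $\exp\!\big(-2(\RP\beta-1)J\big)$; on a compact $K\subset\Blow=\{\RP\beta>1\}$ this is at most $\exp\!\big(-2(\inf_{K}\RP\beta-1)\,m\big)<1$, and the same contraction argument gives $\sup_{\beta\in K}R^{*}(\beta)<1$.
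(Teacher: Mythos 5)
Your proposal is correct and follows essentially the same route as the paper's proof: rescale the contractive factorization $\dy$ of $\tanh J$ to factorize $\tanh\beta J$, reduce the key estimate to $\cosh(2\RP\beta\,J)<\cosh(2J)\cos(2\IP\beta\,J)$ for $J\in[m,M]$, exploit the same monotonicity-in-$J$ observation to deduce this from the single inequality defining $\Bhigh$, and then invoke Corollary~\ref{cor:contraction} (with $s=\theta$) to get a uniform operator-norm bound, hence the spectral-radius bound; part (ii) is handled by the same reduction $\Blow=\{\beta:\sup_{j\in[m,M]}e^{-2(\RP\beta-1)j}<1\}$. The only cosmetic difference is that you use joint continuity and compactness of $K\times[m,M]$ where the paper first forms $L(\beta)=\sup_{j\in[m,M]}|\tanh\beta j|/\tanh j$ and uses continuity of $L$ on a compact $K$, which amounts to the same thing.
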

\begin{proof} 
We will prove part (i). Fix a compact set $K \subset \Bhigh$ and let
\begin{align*}
L(\beta) = \sup_{j \in [m,M]} \frac{ |\tanh \beta j| }{ \tanh j} =  
\sup_{j \in [m,M]} \frac{\cosh j}{\sinh j}\sqrt{\frac{ \cosh (2j \RP \beta) - \cos (2j \IP \beta) }{\cosh (2j \RP \beta) + \cos (2j \IP \beta) }} .
\end{align*}
By compactness of $[m,M]$ and the fact that the hyperbolic tangent does not vanish and is continuous in the right half-plane, $L$
is a continuous function on $\{\beta: 0 <\RP \beta  \}$.
From a simple computation, it follows that $L(\beta)<1$ if and only if
\begin{align*}
\cosh (2j \RP \beta)/\cosh 2j < \cos (2j\IP \beta) \qquad \text{for all} \ j \in [m,M].
\end{align*}
The above inequality can hold only when $|\RP \beta| <1$ and when the right hand side is positive. 
The latter is in particular true when $2M|\IP \beta| < \frac{\pi}{2}$. 
Under these assumptions, both sides of the inequality are decreasing functions of~$j$.
This means that the above condition is satisfied whenever $0<\RP \beta <1$, $2M|\IP \beta| < \frac{\pi}{2}$ and
$\cosh (2m \RP \beta)/\cosh 2m < \cos (2M\IP \beta)$.
Hence, by the definition of $\Bhigh$, we have that $\Bhigh \subset \{\beta: L(\beta) <1 \}$ and thus, by continuity of $L$, 
\begin{align*}
s:=\sup_{\beta \in K} L(\beta)<1.
\end{align*} 
From the definition of $L$, it follows that
\begin{align*}
|\tanh \beta J_{\ue}|/ \tanh J_{\ue}\leq s \qquad \text{for all} \ \ue \in E(\InfG) \ \text{and} \ \beta \in K.
\end{align*}

We assume that the coupling constants $J$ satisfy the high-temperature condition, which means that the weight system $\tanh J$ factorizes to a contractive
weight system $\dx$. Therefore, for $\beta \in K$, $\tanh \beta J$ 
factorizes to a weight system $\dx(\beta)$ satisfying $|\dx_{\ve}(\beta)| = \sqrt{|\tanh \beta J_{\ue}| / \tanh J_{\ue}}\cdot|\dx_{\ve}|$, and hence $|\dx_{\ve}(\beta)|^2/s \leq  |\dx_{\ve}|^2$
for all $\ve \in \dE(\InfG)$. Since $\arctan$ is increasing and $\dx$ is contractive, we have by Corollary \ref{cor:contraction} that 
$\| \TrM_{\FinG}(\dx(\beta))\| \leq s$ for all subtilings~$\FinG$ and all $\beta \in K$. The claim follows because the spectral radius is bounded
from above by the operator norm, and $\TrM_{\FinG}(\dx(\beta))$ has the same spectral radius as $\TrM_{\FinG}(\tanh \beta J)$.

Part (ii) involves less computations and can be proved similarly after noticing that 
\[
\Blow =\Big\{\beta: \sup_{j \in [m,M]} \frac{|\exp(-2\beta j)|}{\exp ( -2 j)} <1\Big\}.   \qedhere
\]
\end{proof}
In the light of Thoerem \ref{thm:expansions} and the remarks which follow it, we are now in a position to prove our main result.

\section{Proof of Theorem \ref{thm:FreeEnergy}} \label{sec:freeenergy}

\begin{proof}
We will prove part (i). Suppose that the coupling constants satisfy the high-temperature condition 
and fix a compact set $K \subset \Bhigh$.
We have to show that the functions $f^{\free}_{\FinG}$ extend analytically to $\Bhigh$ and are uniformly bounded on $K$.

First of all, since zero is not in $\Bhigh$, the factor $1/\beta$ is analytic on $\Bhigh$ and uniformly bounded on $K$. Thus, it is enough 
to consider functions of the form $\ln \ZZ^{\free}_{\FinG}(\beta) /|V(\FinG)|$.
We will use the formula from part (i) of Theorem~\ref{thm:expansions}.
The logarithm of the partition function can therefore be expressed as a sum of three different terms. 
The first one is the constant $|V(\FinG)| \ln 2$, which equals $ \ln 2$ after rescaling by the number of vertices.

To talk about the second term, which comes from the product of hyperbolic cosines, one has to argue that
there is a continuous branch of $\ln(\cosh \beta J_e)$ on $\Bhigh$. Indeed, one can take the principal 
value of the logarithm since $\RP (\cosh \beta J_e )= \cosh (J_e \RP \beta) \cos (J_e \IP \beta)> 0$ on $\Bhigh$. Analyticity of this term follows
since $ \cosh \beta J_e $ is analytic. Furthermore, 
we have 
\begin{align*}
\Big| \ln \Big(\prod_{\ue \in E(\FinG)} \cosh \beta J_e\Big) \Big| &\leq \sum_{\ue \in E(\FinG)} \big| \ln (\cosh \beta J_e) \big| \\
&\leq \sum_{\ue \in E(\FinG)} \Big( \big|\ln |\cosh \beta J_e|\big|  + |\text{Arg}(\cosh \beta J_e)| \Big)  \\
&\leq  | E(\FinG)|\Big( \sup_{j\in[m,M]} \big|\ln |\cosh \beta j|\big|   + \pi/2\Big).
\end{align*} 
Since the hyperbolic cosine does not vanish in the right half-plane and $[m,M]$ is compact,
the above supremum is a continuous function of $\beta$ on $\Bhigh$, and therefore is bounded on $K$.
The number of edges is bounded by the number of vertices times the maximal degree of $\InfG$, and thus,
after rescaling by the volume, this term is uniformly bounded in $\FinG$.

The last term is given by the logarithm of the determinant of the
Kac--Ward operator. Let $\lambda_k$, $k\in\{1,2,\ldots,2n\}$, $n=|E(\FinG)|$, be the eigenvalues of $\TrM_{\FinG}(\tanh \beta J)$.
By Theorem \ref{thm:boundedradius}, we know that their moduli are bounded from above by some constant $s<1$ (uniformly in $\FinG$ and $\beta \in K$).
One can therefore define the logarithm by its power series around one, i.e.\
\begin{align*}
\ln  \det \big[\Id - \TrM_{\FinG}(\tanh \beta J)\big] &= \ln  \prod_{k=1}^{2n} (1- \lambda_k)  = \sum_{k=1}^{2n} \ln (1- \lambda_k) \\
 &= -\sum_{k=1}^{2n} \sum_{r=1}^{\infty} \lambda_k^r/r = - \sum_{r=1}^{\infty} \sum_{k=1}^{2n} \lambda_k^r/r \\ 
 &= - \sum_{r=1}^{\infty} \trace [\TrM^r_{\FinG}(\tanh \beta J) ]/r,
\end{align*}
where $\trace$ is the trace of a matrix. It is clear that $\trace [\TrM^r_{\FinG}(\tanh \beta J)]$ is an analytic function of $\beta$.
Moreover, $|\trace [\TrM^r_{\FinG}(\tanh \beta J)]| \leq 2|E(\FinG)| s^r$ for any~$r$,
and therefore the above series converges uniformly on $K$. 
It follows that the series defines a holomorphic function on $\Bhigh$. 
Again, after rescaling by the number of vertices, it becomes uniformly bounded in $\FinG$. 
This completes the proof of the first part of the theorem.

The proof of part (ii) uses the second formula from Theorem \ref{thm:expansions} and proceeds in a similar manner. 
\end{proof}

\textbf{Acknowledgments.} 
The research was supported by NWO grant Vidi 639.032.916.

\bibliographystyle{amsplain}
\bibliography{criticalpoint}
\end{document}